\documentclass{article}

\usepackage{microtype}
\usepackage{graphicx}
\usepackage{subcaption,tikz,dsfont}
\usepackage{booktabs} 

\usepackage{hyperref}

\usepackage[preprint]{icml2026}

\usepackage{amsmath}
\usepackage{amssymb}
\usepackage{mathtools}
\usepackage{amsthm,xspace,nicefrac}

\usepackage[symbol]{footmisc}
\usepackage{thm-restate}

\usepackage[capitalize,noabbrev]{cleveref}

\newcommand{\declarecolor}[2]{\definecolor{#1}{RGB}{#2}\expandafter\newcommand\csname #1\endcsname[1]{\textcolor{#1}{##1}}}
\declarecolor{White}{255, 255, 255}
\declarecolor{Black}{0, 0, 0}
\declarecolor{LightGray}{216, 216, 216}
\declarecolor{Gray}{127, 127, 127}
\declarecolor{Orange}{237, 125, 49}
\declarecolor{LightOrange}{251,229, 214}
\declarecolor{Yellow}{255, 192, 0}
\declarecolor{LightYellow}{255, 242, 200}
\declarecolor{Blue}{91, 155, 213}
\declarecolor{LightBlue}{222, 235, 247}
\declarecolor{Green}{112, 173, 71}
\declarecolor{LightGreen}{226, 240, 217}
\declarecolor{Navy}{68, 114, 196}
\declarecolor{LightNavy}{218, 227, 243}

\DeclareMathOperator*{\argmin}{arg\,min}

\newcommand{\E}[1]{\mathbb{E}\left[#1\right]}

\renewcommand{\P}[1]{\mathbb{P}\left(#1\right)}

\newcommand{\swapping}{\textsc{Swapping}\xspace}

\newcommand{\Z}{\mathbb{Z}}

\newcommand{\cM}{\mathcal{M}}
\newcommand{\cL}{\mathcal{L}}
\newcommand{\OPT}{\textnormal{OPT}}
\newcommand{\ALG}{\textnormal{ALG}\xspace}
\newcommand{\rob}{\textsc{Robust-Swap}\xspace}
\newcommand{\nonrob}{\textsc{Swap}\xspace}
\newcommand{\matroid}{\textsc{ConsistentMatroid}\xspace}
\newcommand{\cardinality}{\textsc{ConsistentCardinality}\xspace}

\newcommand{\CommonElements}{\textnormal{Shared}\xspace}

\newcommand{\swap}{\textsc{Any-Swap}\xspace}

\newcommand{\weight}[1]{w(#1)}

\newcommand{\ind}[1]{\mathds{1}_{\left\{#1\right\}}}
\newcommand{\e}{\varepsilon}

\newcommand{\scheduling}{\textsc{Random-Scheduling}\xspace}

\newcommand{\rank}{k}
\newcommand{\RobustGreedy}{\textsc{Robust-Greedy}\xspace}

\newcommand{\initial}{d_0}
\newcommand{\target}{d_\ell}

\newcommand{\zdist}{\mathrm{Z}}

\newcommand{\cA}{\mathcal{A}}

\theoremstyle{plain}
\newtheorem{theorem}{Theorem}[section]

\newtheorem{claim}[theorem]{Claim}
\newtheorem{lemma}[theorem]{Lemma}

\theoremstyle{definition}

\theoremstyle{remark}
\newtheorem{remark}[theorem]{Remark}

\icmltitlerunning{A General Framework for Dynamic Consistent Submodular Maximization}

\begin{document}

\twocolumn[
\icmltitle{A General Framework for Dynamic Consistent Submodular Maximization}

  \begin{icmlauthorlist}
    \icmlauthor{Paul D\"utting}{google}
    \icmlauthor{Federico Fusco}{sapienza}
    \icmlauthor{Silvio Lattanzi}{google}
    \icmlauthor{Ashkan Norouzi-Fard}{google}
    \icmlauthor{Ola Svensson}{epfl}
    \icmlauthor{Morteza Zadimoghaddam}{google}\end{icmlauthorlist}

  \icmlaffiliation{google}{Google Research}
  \icmlaffiliation{sapienza}{Dept. of Computer, Control and Management Engineering, Sapienza University of Rome}
  \icmlaffiliation{epfl}{EPFL}

  \icmlcorrespondingauthor{Federico Fusco}{fuscof@diag.uniroma1.it}

  \vskip 0.3in
]

\printAffiliationsAndNotice{} 
\begin{abstract}
  Consistency is an important property in dynamic submodular maximization and entails maintaining a near-optimal solution at all times, making only a small number of adjustments to the solution in each step. Prior work has explored this question for the insertion-only case, where the algorithm faces a stream of $n$ insertions, and has established lower and upper bounds for the cardinality-constrained version of the problem. We consider this question in the fully dynamic setting, where the stream of operations may contain both insertions and deletions.
  We develop a general framework for designing algorithms for this setting, and instantiate it to obtain the first constant-factor approximations with sublinear consistency.
  For cardinality constraints, we propose a $\nicefrac 12 - O(\e)$ approximation that is $O(\nicefrac{1}{\e^2})$ consistent.
  For rank-$\rank$ matroid constraints, we construct a $\nicefrac 14 - O(\e)$ approximation to the dynamic optimum that is $O(\nicefrac{\log \rank}{\e^2})$ consistent. 
\end{abstract}

\section{Introduction}
    
    Submodular maximization is a fundamental topic in machine learning, capturing a diverse range of applications such as data summarization \cite{lin-bilmes-2011-class,BairiIRB15}, video analysis \cite{ZhengJCP14}, sparse reconstruction \cite{Bach10,DasK11}, and active learning \cite{GolovinK11,AmanatidisFLLR20}. In many of the aforementioned applications, one has to deal with evolving datasets, where data is inserted and removed, and wants to provide a ``stable'' solution for the end user. For example, in data summarization, it is common to analyze evolving datasets and it is important to have a stable summary under small changes. Similarly, in recommendation systems, one often wants to retrieve suggestions over an evolving catalog and ideally would modify recommendations only if substantial changes to the catalog occurs. Motivated by these practical applications and the need of ``stable'' solutions in many of these applications, a recent line of work has started to explore submodular maximization under \emph{consistency} constraints \cite{DuettingFLNZ24,DuttingFLNZ25}. While prior work has focused on the insertion-only case in the cardinality constrained setting, in this paper we investigate the more realistic fully-dynamic setting, both for the cardinality-constrained case and the more general case of matroid constraints. 
    Our work thus contributes to the rapidly growing literature that studies central optimization problems in machine learning from a consistency perspective \citep[see, e.g.,][]{lattanzi2017consistent,jaghargh2019consistent,guo2021consistent,cohen2022online}.
    
    \paragraph{Problem Formulation.} Generalizing the model of consistent submodular maximization of \citet{DuettingFLNZ24} and \citet{DuettingFLNSZ24}, we aim to design consistent approximation algorithms for the fully-dynamic environment, where elements can be inserted or deleted by an oblivious adversary. 
    Our algorithms face a sequence of $n$ update operations, each being either an insertion or a deletion\footnote{For simplicity, we assume that a deleted element \emph{cannot} be re-inserted in the stream; this assumption is without loss of generality, as we can imagine adding multiple copies of the same element.}. Given a (monotone) submodular function $f$ defined on these elements, our primary objective is to maintain a high-value feasible solution, which does not change much  in each step.

    Towards defining these goals more formally, let $\ALG_t$ denote the algorithm's solution after the $t^{th}$ stream operation (either an insertion or a deletion). We denote with $x_t$ the element considered in the $t^{th}$ stream operation and with $X_t$ the set of elements that have been inserted up to operation $t$, and not yet deleted. In particular, we have $\ALG_t \subseteq X_t$. We strive for the following two properties:
     
    -- \textbf{Approximation:} A (randomized) algorithm is an \emph{$\alpha$-approximation}, for $\alpha \leq 1$, if at each time step $t$, it holds $\mathbb{E}[f(\ALG_t)] \geq \alpha f(\OPT_t)$, where $\OPT_t$ is the optimal solution among the available elements $X_t$. 
   
    -- \textbf{Consistency:} An algorithm is \emph{$C$-consistent}, if $|\ALG_t \triangle \ALG_{t-1}| \leq C$ for all $t$\footnote{$A\triangle B=A \setminus B \cup B \setminus A$ denotes the symmetric difference.}, where $C$ may be non-constant (e.g., a function of $k$).
   
     As in prior work, we allow the algorithm to be randomized, so that the approximation guarantee may hold in expectation, while the consistency bound has to be enforced realization-wise. In this paper, we measure the changes in solution via the symmetric difference, as opposed to simply counting the new elements inserted as in previous works \citep{DuettingFLNZ24,DuettingFLNSZ24}. This definition is more natural when deletions may occur; it is however a simple argument to see that the two definitions are equivalent for monotone submodular functions, up to a multiplicative constant of $2$.  

    \paragraph{Our Results.} We propose a general framework for designing randomized consistent algorithms for the fully-dynamic setting, and instantiate this framework for two important submodular maximization tasks: 
    
    -- For cardinality constraints, we devise a poly-time $\nicefrac 12 - O(\e)$ approximation that is $O(\nicefrac{1}{\e^2})$ consistent.\footnote{Consistently with the literature, $\e$ is a small constant precision parameter chosen by the algorithm designer. The $O$-notation simply hides some constant multiplicative factors.} 
    
    -- For matroid constraints, we give a poly-time $\nicefrac 14 - O(\e)$ approximation 
        that is $O(\nicefrac{\log \rank}{\e^2})$ consistent.

    These are the first approximation algorithms in the fully-dynamic setting that exhibit consistency that is sublinear in the cardinality of the solution $\rank$. 
    Prior work explored consistent submodular maximization in the insertion-only case, focusing on the cardinality-constrained setting. For monotone submodular objectives, there is a tight information-theoretic bound of $\nicefrac 23$ for randomized strategies and a randomized poly-time $\approx 0.51$-approximation \cite{DuettingFLNSZ24}. There is also a deterministic poly-time $\approx 0.382$-approximation, and it is known that no deterministic algorithm (poly-time or not) can do better than $\nicefrac 12$ \cite{DuettingFLNZ24}. All these results apply to constant consistency, but the impossibility results are not affected if we allow for sublinear-in-$k$ many changes. Our algorithm almost matches the state-of-the-art $\approx 0.51$-guarantee for poly-time algorithms, while allowing for both insertions and deletions.
    For matroid feasibility constraints, there is no specific work on consistency, but the streaming algorithm by \citet{ChakrabartiK15} maintains a solution that is a $\nicefrac 14$-approximation and only changes by at most one element after each stream insertion. We match this approximation guarantee in the significantly more challenging fully dynamic setting with insertions and deletions, while incurring a still small logarithmic consistency.

    {Note, the ``technical'' gap between cardinality and matroid constraints is not new in the submodular maximization literature, as the latter poses significant challenges. For instance, while the well known (tight) $1-\nicefrac 1e$ approximation for offline submodular maximization with cardinality constraints dates back to the seventies \citep{NemhauserWF78}, it took more than three decades to obtain the same result for matroids \citep{CalinescuCPV11}. Similarly, in the semi-streaming setting, while a tight $\nicefrac 12$ approximation was given in~\citet{badanidiyuru2014kdd}, and a matching information-theoretic hardness result is shown in~\citet{FeldmanNSZ23}, the best approximation for matroids is currently $0.3178$ \citep{FeldmanLNSZ22ola} (and no better hardness result is known)}.
    
    \paragraph{Technical Overview.}
    Designing consistent algorithms for submodular maximization is challenging since the optimal solution can change drastically from step to step, while the dynamic solution needs to stay close to the optimum with only a limited number of changes. While this is also true in insertion-only settings, this is amplified in settings with deletions. Let $\OPT_t$ denote the optimal solution at time $t$. In an insertion-only model, the insertion of a new element does not significantly degrade solution quality, provided we can modify $\ALG_t$ via a single operation. 
    Specifically, swapping a new element with a random element in $\OPT_t$ yields a $(\nicefrac 12 - O(\nicefrac{1}{k}))$-approximate solution: Thus, maintaining a high-quality solution after an insertion is relatively straightforward; in contrast, the deletion-only setting is radically different. If $\OPT_t$ is dominated by a single element, its removal may necessitate the insertion of $O(k)$ new elements to compensate for the lost value. 
    In addition,  many algorithms for insertion-only or deletion-only models rely on the monotonicity of the optimal solution's value, this property does not hold in a fully dynamic environment. A single insertion may substantially increase the optimal value, while a deletion can sharply decrease it. These frequent, arbitrary fluctuations prevent the use of standard monotonic arguments and complicate the analysis of the online setting.
    
    A further complication compared to prior work comes from the fact that we aim to design online algorithms under matroid feasibility constraints, limiting which elements can be swapped into the solution. For instance, in a partition matroid, we can only replace an element with someone else from the same subset of the underlying partition. Intuitively, we want to maintain a solution which cannot be affected too much by future insertions and deletions, or that can be ``repaired'' with a sublinear number of changes. 

    To address these challenges we combine multiple ideas and introduce new ones. Our initial ingredient comes from the literature on deletion-robust submodular maximization \citep{ZhangTG22}, where the task is to keep a representative coreset of elements that contains a good solution even if an adversary deletes a predetermined number of elements. Given the online nature of our problem, we need to be protected against an unknown (and time-varying) number of deletions. To this end, we maintain online multiple ``guesses'' (or robustness levels) and corresponding coresets. Our second ingredient is a randomized scheduling scheme that schedules transition periods for each robustness level, so that the coresets corresponding to various deletion robustness levels are recomputed ``often enough'', while still allowing for transition from one coreset to another in a consistent way. The third ingredient is a hierarchical structure that combines the different coresets in a single dynamic solution. We distill these ingredients into a general framework, that can then be instantiated for different problems. 
    Instantiating this framework entails specifying two submodular routines, a robust and a non-robust one, specialized to the problem at hand. The choice of these routines determines the approximation and consistency guarantees.
 
    For both general matroids and cardinality constraints, the main challenge then lies in designing the robust submodular routines. 
    For matroids, we combine the \swapping algorithm by \citet{ChakrabartiK15}, with the deletion-robustness sampling technique of \citet{ZhangTG22}, which we extend to handle $O(\log k)$ robustness levels. 
    For cardinality constraints, we design a simple greedy algorithm combined with the sampling technique by \citet{ZhangTG22} for a single robustness level.
    
    Analyzing these routines poses significant challenges and requires non-trivial ideas. For matroids, the dynamic solution depends on $O(\log k)$ partial solutions, so we need to make sure that the approximation guarantees are still respected. We do that by ensuring that all the elements in a carefully chosen superset of the available elements are accounted for in the analysis. For cardinality constraints, our analysis does not follow the standard path of the greedy analysis. Instead, it relies on a threshold-based argument to bound the marginal contribution of all the elements not in the solution. Our novel hybrid approach bridges the classic greedy method and threshold based selection, establishing a more effective paradigm for submodular optimization.

    \paragraph{Related Work.} 
    Our problem isat the intersection of three lines of research on submodular maximization: consistent algorithms (so far studied only in insertion-only streams), deletion robustness, and fully-dynamic environments.

   We already discussed the recent literature on consistent submodular maximization \cite{DuettingFLNZ24,DuettingFLNSZ24}. A related line of work considers online submodular maximization with preemption \cite{BuchbinderFS15a,Chan0JKT17}. Here, unlike in our problem, discarded elements cannot be added back into the solution, but it is possible to swap {arbitrarily-many fresh}
   elements into the solution. Recent work by \citet{buchbinder2025chasingsubmodularobjectivessubmodular} explores a very general approach to obtaining consistency guarantees, but differs from our work in that it considers a different benchmark (the optimal online algorithm) and aims for a low amortized number of changes (rather than our more practical, worst-case per-step guarantee).

    In the deletion robust setting \citep{MirzasoleimanK017}, the algorithm is given a set $V$ and a parameter $d$, and its goal is to find a small enough summary $W \subseteq V$ that is robust to $d$ deletions, i.e., such that the optimal solution in $W \setminus D$ is a good approximation of the optimal one in $V \setminus D$, for any choice of the $d$ elements in set $D$ by an oblivious adversary. There are mainly two approaches to solve this problem: (i) consider various thresholds and sample u.a.r. elements accordingly, such approach yields constant approximations with $|W| \in \Omega(k + d \log k)$ \citep{MitrovicBNTC17,DuettingFLNZ22}, (ii) choosing the elements to swap in with a probability that is inversely proportional to their current marginal contribution \citep{ZhangTG22}, obtaining (the optimal) $|W| \in \Omega(k + d)$.
        
    In the fully dynamic literature, the goal is to maintain a good solution while exhibiting a small amortized running time. For monotone objectives with cardinality constraints, the state-of-the-art is a $\nicefrac 12$-approximation with polylogarithmic amortized running time \citep{LattanziMNTZ20,BanihashemBGHJM23}, which is tight \citep{ChenP22}. For matroid constraints, $\nicefrac 14$-approximation algorithms are known \citep{BanihashemBGHJM24,DuttingFLNZ25}. The main difference between these papers and ours is that we place less emphasis on running time, focusing instead on ensuring a small number of per-round updates to the solution. In particular, the fully-dynamic algorithms in the literature \emph{do} change the whole solution from time to time.

\section{Preliminaries}

A set function $f: 2^X \rightarrow \mathbb{R}$  is monotone if for any $S, T \subseteq X$ with $S \subseteq T$ it holds that $f(S) \leq f(T)$. In addition,  $f$ is submodular if for any $S, T \subseteq X$ with $S \subseteq T$ and every $x \in X \setminus T$ it holds that $f(x \mid S)  \geq f(x \mid T)$, where for a subset $U\subseteq X$, we use $f(x \mid U) = f(U \cup \{x\}) - f(U)$ to denote the marginal value of element $x$ with respect to the subset $U$. A matroid on ground set $X$ is defined by a family $\cM$ of subsets of $X$ (called independent sets) satisfying the three axioms: (1) $\emptyset \in \cM$, (2) if $T \in \cM$ and $S \subseteq T$ implies $S \in \cM$, and (3) if $S, T \in \cM$ and $|T| > |S|$ then there exists $x \in T \setminus S$ such that $S + x \in \cM$\footnote{For simplicity, we adopt the notation $A + x$ instead of $A \cup \{x\}$; similarly, we use $A-x$ to denote $A \setminus \{x\}$.}. It can be shown that all maximal independent sets have the same cardinality, which is called the rank $\rank$ of the matroid \citep[see, e.g.,][]{Schrijver03}. Cardinality constraints are a special case of matroids (known as $k$-uniform matroids), where $S \subseteq X$ is independent if $|S| \leq k$. 

We use $n$ to denote the number of stream operations, indexed starting from $0$ to $n-1$. For simplicity, we assume that such a number is known in advance; however, this assumption can be lifted with minimal adjustments (see \cref{remark:remove_n}). 
{Following the standard approach in submodular maximization \citep[e.g.,][] {CalinescuCPV11}, we assume value oracle access to the submodular function (i.e., querying the value of a set $S$ yields $f(S)$) and feasibility oracle access to the matroid constraint (given set $S$, answers whether $S$ is independent or not); The running time of an algorithm is then measured in terms of its calls to these two types of oracles.}

\section{A Consistent Fully-Dynamic Framework}
\label{sec:template}
    In this Section, we propose a general recipe to design consistent algorithms for fully dynamic submodular maximization, which we then specialize for matroid and cardinality constraints. Our framework is modular and uses three components: a scheduling routine \scheduling, a robust submodular maximization routine, and a non-robust one. For now, we use the term ``robust'' only to differentiate the two submodular routines. All these routines also share a precision parameter $\e \in (0,1)$ as a global variable. 
\usetikzlibrary{calc, arrows.meta}

\begin{figure}[t!]
    \centering
    \begin{tikzpicture}[
        x=0.25cm, y=0.25cm,
        level0/.style={fill=red!20, draw=red!80!black, line width=0.6pt},
        level1/.style={fill=green!20, draw=green!80!black, line width=0.6pt},
        level2/.style={fill=blue!20, draw=blue!80!black, line width=0.6pt},
        timeline/.style={draw=black, line width=1pt}
    ]

        \def\n{32}
        \def\dzero{16}
        \def\shift{5}
        
        \def\wzero{2.5} 
        \def\wone{1.2}  
        \def\wtwo{0.7}  
        \def\boxH{2}    

        \newcommand{\drawWindows}[1]{
            \foreach \xbase in {0, \dzero} {
                \filldraw[level0] ({\xbase+#1},0) rectangle +(\wzero, \boxH);
                
                \filldraw[level1] ({\xbase + \wzero + #1}, 0) rectangle +(\wone, \boxH);
                \filldraw[level1] ({\xbase + 8.5 + #1}, 0) rectangle +(\wone, \boxH); 
                
                \filldraw[level2] ({\xbase + \wzero + \wone + #1}, 0) rectangle +(\wtwo, \boxH);
                \filldraw[level2] ({\xbase + 5.5 + #1}, 0) rectangle +(\wtwo, \boxH);
                \filldraw[level2] ({\xbase + 8.5 + \wone + #1}, 0) rectangle +(\wtwo, \boxH);
                \filldraw[level2] ({\xbase + 13.5 + #1}, 0) rectangle +(\wtwo, \boxH);
            }
        }

        \node[anchor=west] at (0, \boxH + 0.5) {\scriptsize \textbf{Before Shift ($\tau=0$)}};
        \draw[timeline] (0,0) rectangle (\n, \boxH);
        \drawWindows{0}

        \begin{scope}[yshift=-1.10cm] 
            \node[anchor=west] at (0, \boxH + 0.5) {\scriptsize \textbf{After Shift ($\tau=5$)}};
            \draw[timeline] (0,0) rectangle (\n, \boxH);
            \clip (0,0) rectangle (\n, \boxH);
            
            \drawWindows{\shift}
            \drawWindows{\shift - \n}
        \end{scope}

        \draw[-Stealth, dashed, gray!70, line width=0.8pt] (\n/2, -0.2) -- node[right, black, font=\scriptsize] {Shift $\tau=5$} (\n/2, -2.2);

    \end{tikzpicture}
    \caption{\footnotesize Visualization of \scheduling for $n=32$, $\ell =2$, $d_0 = 16$, $d_1= 8$, $d_2 = 4$, and $\e = \nicefrac 18$. The lower figure corresponds to a possible random output of the routine (resulting from a shift of $5$ time steps), while the upper one corresponds to the scheduling before the shift. The transition windows of level $0$ are in red, of level $1$ are in green, and of level $2$ are in blue.}
    \label{fig:scheduling_viz}
\end{figure}
    At a high level, the framework begins with an initialization phase, performed by \scheduling. 
    It receives as input two robustness levels, $d_0$ and $d_{\ell}$, tailored to the specific problem at hand, and outputs a randomized scheduling. The scheduling algorithm computes a family of robustness levels of the form $d_i = \lceil\nicefrac{d_0}{2^i}\rceil$, interpolating between $d_0$ and $d_\ell$\footnote{Without loss of generality, we assume that $\nicefrac{d_\ell}{d_0}$ is a power of $2$, so that there are $\ell+1$ robustness levels. Indeed, rescaling $\target$ only affects consistency by at most a constant multiplicative factor.}, and specifies transition times for the various robustness levels and corresponding transition windows. The indices of the stream operations are then partitioned into two categories: transition windows and non-transition times. Upon each operation of the stream, the framework does one of two things: if the time index corresponds to a transition time, then it calls the robust routine (corresponding to the given robustness level), and then performs the transition to the newly computed solution during the corresponding transition window. Otherwise (for non-transition times outside transition windows), it calls the non-robust routine. Both routines are passed a suitable temporary solution and a set of candidate elements to process. We describe the three modular ingredients and then detail how they are combined. Note, while the only variability in \scheduling entails the input robustness parameters, the submodular routines are problem-dependent.

    \paragraph{Scheduling Routine.} The scheduling routine recursively constructs the transition times corresponding to the various robustness levels. At level $0$, it divides the stream into equal chunks of length $d_0$, accounting for a transition window of length $\e' d_0$ at the beginning of each chunk. Then, it subdivides the remaining $(1-\e')d_0$ time steps of each chunk in two, accounting for two transition windows of length $\e' d_1$ each, and so on. This recursive procedure goes on until the last level of robustness is associated with its transition times and corresponding transition windows. Finally, a random shift $\tau$ is chosen u.a.r., and all the time steps are cyclically shifted according to it. For more detailed information, we refer to the pseudocode and to \Cref{fig:scheduling_viz}. We denote with $T_i$ the set of all the (random) transition times for the generic level $i$ of robustness parameter $d_i$, while the integer interval $[t_i, t_i+\e'd_i)$ is the corresponding \emph{transition window}.

    \begin{algorithm}[!t]
    \caption*{\scheduling}
    \begin{algorithmic}[1] 
        \STATE \textbf{Input:} Initial and target robustness parameters $\initial$ and $\target$, stream length $n$
        \STATE $\cL \gets \{d_i = \lfloor \nicefrac{\initial}{2^i} \rfloor | i \in \Z_{\ge 0}, d_i \ge \target\}$ \hfill \COMMENT{Levels}
        \STATE $\e' \gets \nicefrac{\e}{|\cL|}$ 
        \STATE $T_0 \gets [ j \cdot \initial \text{ for $j=0, \dots, \lfloor \nicefrac{n}{\initial}\rfloor -1$}]$ \hfill \COMMENT{Level $0$}
        \STATE $\Delta_0 \gets d_0$
        \FOR{$i \in \{1,\dots, \ell\}$}
            \STATE $T_i \gets [\,\cdot \,]$ \hfill \COMMENT{Initialize transitions for level $i$}
            \FOR{each index $t$ in $T_{i-1}$}
                \STATE Add $t + \e'd_{i-1}, t + \frac{1}2(\e'd_{i-1} + {\Delta_{i-1}})$ to $T_i$ if feasible (i.e., if smaller than $n$)
                \STATE $\Delta_i \gets d_i(1-i\e')$
            \ENDFOR
        \ENDFOR
        \STATE Draw a random shift $\tau$ u.a.r. in $\{1,2,\dots, n\}$ and add $\tau$ to each index in $T_0, \dots, T_\ell$, modulo $n$
        \STATE \textbf{Return:} $T_0, T_1, \dots, T_\ell$
    \end{algorithmic}
    \end{algorithm}

    \begin{remark}
        For simplicity, we present a version of \scheduling that takes as input $n$, but this is not actually needed. It is enough to receive just $\initial$ and $\target$. It can directly compute the deterministic scheduling of the first $\initial$ time steps, perform a cyclic shift drawn u.a.r. in $\{0,1,\dots, \initial-1\}$, and then repeat itself indefinitely.
        \label{remark:remove_n}
    \end{remark}

    The properties enforced by \scheduling are presented in the following Lemma, proved in Appendix~\ref{app:missing}.
    
    \begin{restatable}{lemma}{schedulingLemma}
    \label{lem:schedule} 
    Any realization of \scheduling enforces the following properties:
            (i) The transition windows are non-overlapping.
            (ii) For any non-transition index $t$ and robustness level $i$, the last $t_i$ in $T_i$ before $t$ respects $t-t_i \le d_i$. If no such $t_i$ exists, then $t<d_i$.
            (iii) At most an $\e$ fraction of all times is in a transition window.
    \end{restatable}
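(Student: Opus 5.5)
The plan is to argue first about the deterministic (unshifted) schedule, one level-$0$ chunk $[jd_0,(j+1)d_0)$ at a time. Then we check that the cyclic shift by $\tau$ keeps all three properties, up to the wrap-around at the boundary $n$. Throughout we use $d_i = d_0/2^i$ (exact, since $d_\ell/d_0$ is a power of two) and $\e' = \e/|\cL| = \e/(\ell+1)$.

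\textbf{Structure of a chunk.} The key step is an invariant, proved by induction on $i$. For each level-$i$ transition time $t_i$, the level-$i$ window $[t_i, t_i+\e' d_i)$ is followed by a free block of length $\Delta_i' := d_i - (i+1)\e' d_i$. This free block is the only region that later levels subdivide. Moreover, the next level-$i$ transition time after $t_i$ (within the chunk, or at the start of the next chunk) is exactly $t_i + d_i$.

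For $i=0$ this is immediate, since chunks have length $d_0$. For the inductive step, the free block $[t_{i-1}+\e'd_{i-1},\, t_{i-1}+d_{i-1})$ is split in two by the new transition times $t_{i-1}+\e'd_{i-1}$ and its midpoint shifted appropriately. Each half has length at most $d_{i-1}/2 = d_i$, and each begins with a level-$i$ window of length $\e' d_i$. Consecutive level-$i$ times are therefore spaced by at most $d_i$. The sum of the remaining free lengths then follows the recursion for $\Delta_i$ in the pseudocode.

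I expect the main obstacle to be the bookkeeping of the exact offsets in line 9 (the quantity $\tfrac12(\e'd_{i-1}+\Delta_{i-1})$ and the update $\Delta_i = d_i(1-i\e')$). If those formulas do not match literally, I would prove the cleaner statement above, namely spacing at most $d_i$ and windows nested inside the parent's free block, which is all the lemma needs.

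\textbf{Property (i).} Non-overlap follows from the nesting. Level-$i$ windows lie inside the free portion of a level-$(i-1)$ interval, and that free portion is disjoint from all windows of levels $<i$. The two children of a single parent are disjoint because each has length $\e'd_i$, which is less than the half-block length.

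\textbf{Property (ii).} Spacing at most $d_i$ between consecutive level-$i$ transition times gives $t - t_i \le d_i$ for any $t$ after the first one. If $t$ precedes every element of $T_i$, then by the shift the first level-$i$ time is at most $d_i$ (up to the wrap-around), so $t < d_i$. The cyclic shift preserves the spacing modulo $n$. At the wrap point I would use that $T_0$ consists of multiples of $d_0$, and I expect to assume $d_0 \mid n$ (or truncate the last partial chunk) so that the cyclic pattern is periodic.

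\textbf{Property (iii).} Count per chunk: level $i$ contributes $2^i$ windows of length $\e' d_0/2^i$, i.e.\ $\e' d_0$ total. Summing over the $\ell+1$ levels gives $(\ell+1)\e' d_0 = \e d_0$, an $\e$ fraction of the chunk. Since chunks tile the stream and the shift is a bijection on $\{0,\dots,n-1\}$, the fraction is at most $\e$ overall; a partial last chunk only removes windows.
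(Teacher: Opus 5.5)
The parts of your proof for (i) and (iii) are fine. Your overall plan is the paper's: analyze the unshifted schedule chunk by chunk, use the nested window/free-block structure for (i), and count $2^i$ windows of length $\e' d_0/2^i$ per level for (iii).

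\textbf{The gap is in (ii).} Two of your claims are false for this schedule: that the next level-$i$ time after $t_i$ is exactly $t_i+d_i$, and that consecutive level-$i$ transition times are at most $d_i$ apart. This is structural, not an artifact of the offsets in line 9. Already at level $1$, the chunk $[0,d_0)$ contains the level-$1$ times $\e' d_0 = 2\e' d_1$ and $\tfrac12(\e' d_0 + d_0) = (1+\e')d_1$, and the next one is $d_0+\e' d_0$. So the gaps alternate between $(1-\e')d_1$ and $(1+\e')d_1 > d_1$.

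Your step from ``each half has length at most $d_i$'' to ``spacing at most $d_i$'' misses something. Between the end of one level-$i$ half and the next level-$i$ time there can be windows of lower levels; here it is the level-$0$ window of the next chunk.

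Tellingly, your argument for (ii) never uses that $t$ is a non-transition index, and without that hypothesis (ii) is false. Take $t = d_0+\e' d_0-1$, the last index of a level-$0$ window. The last level-$1$ time before $t$ is $(1+\e')d_1$, so $t-t_1 = (1+\e')d_1-1 > d_1$ whenever $\e' d_1>1$.

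Your no-predecessor case has the same flaw. After a shift, the first level-$i$ time can exceed $d_i$; for instance, let index $0$ land just after the second level-$1$ time of a chunk.

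\textbf{The fix.} It uses exactly the structure you already built, and it is the paper's argument. A non-transition $t$ avoids every window of level $<i$, so it lies in some level-$(i-1)$ free block (for $i=0$, simply the chunk containing $t$). Hence it lies in one of that block's two halves $[s,s+\Delta_i)$, where $s\in T_i$ and $\Delta_i=(1-i\e')d_i\le d_i$. That half contains no other level-$i$ time. Also $t\neq s$, because $s$ opens a window. So $s$ is the last level-$i$ time before $t$, and $t-s<\Delta_i\le d_i$.

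Reading this on the cycle $\{0,\dots,n-1\}$ also settles the boundary clause; this uses your assumption $d_0\mid n$, which makes the shifted pattern periodic. If the half containing $t$ wraps past $n-1$, then no level-$i$ time precedes $t$, and $t<\Delta_i\le d_i$.
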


    \paragraph{Robust Submodular Routine.} The robust submodular routine, denoted with $\cA_R$ in the pseudocode, takes as input three objects: an initial feasible solution, a set of candidate elements to process, and a deletion robustness parameter $d$. At a high level, it modifies the initial solution to account for the high-value elements in the candidate set, while being robust against up to $d$ deletions. The routine then outputs an updated solution $I$ and residual candidate elements $C$.

    \paragraph{Non-Robust Submodular Routine.} The non-robust routine, denoted with $\cA_N$ in the pseudocode, simply takes as input an initial solution and a set of candidate elements, and updates the solution according to the set of candidate elements, until they have all been processed. Then, it outputs the updated solution $I.$

    \paragraph{The General Framework.} The general framework combines the three ingredients we have just presented (see also pseudocode). First, it computes the randomized transition times $T_0, \dots, T_\ell$, according to some initial and target robustness $d_0$ and $d_\ell$. Then, process sequentially the stream: 

    -- \textbf{Transition windows.} If the index $t$ of the stream operation is a transition time, i.e., $t \in T_i$ for some robustness level $i$, then the algorithm starts by identifying the last transition time $t'$ for the previous robustness level $T_{i-1}$. Then, it computes $I_t$ and $C_t$ by calling the robust routine $\cA_R$ on $I_{t'}$ with candidate set $X_t \setminus (X_{t'} \setminus C_{t'})$ and robustness $d_i$. During the ensuing transition window (operations $t, \dots, t+\e'd_i-1$), the current dynamic solution transitions consistently to $I_{t}$, removing the elements that get deleted in the meantime.

    -- \textbf{Non-Transition Steps.} If index $t$ of the stream operation does not belong to a transition window, then the algorithm finds the last transition time $t'$ for robustness level $T_{\ell}$. Then, the solution $\ALG_t$ maintained by the algorithm consists of the output of the non-robust routine on $I_{t'}$ with candidate set $X_t \setminus (X_{t'} \setminus C_{t'})$, after all the elements that are no longer active are deleted.

    {The candidate sets of the form $X_t \setminus (X_{t'} \setminus C_{t'})$ that are given in input to the submodular routines \emph{do} contain $C_{t'} \cap X_t$. Stated differently, the active elements in the candidate sets output by the routines at the ``previous level'' are processed by the submodular routines at time $t$.} 
    We adopt the convention that if the previous reference time $t'$ is not well defined (because of the scheduling random shift), then the corresponding sets $X_{t'}, I_{t'}$ and $C_{t'}$ are set by default to $\emptyset$. 
    
    While it is clear which solution is maintained after the non-transition steps, we still need to specify which solution the algorithm maintains within the transition windows. At a generic transition time step $t_i$, the algorithm receives the \emph{old} solution $\ALG_{t_i-1}$, and computes the new candidate solution $I_t$ via the robust routine. However, it cannot immediately set $\ALG_{t_i} = I_t$, because this may change too many elements in the solution, thus violating consistency. 
    We note that the common elements $\CommonElements_{t} = \ALG_{t_i-1} \cap I_t$ do not need to be touched during the transition window, which then only entails swapping $\ALG_{t_i-1} \setminus I_t$ with $I_t\setminus \ALG_{t_i-1}$, while maintaining feasibility.
    
    The algorithm divides the elements in $I_t \setminus \ALG_{t_i-1}$ into $d_i\e'$ equal-cardinality chunks $I_t^1, \dots, I_t^{d_i\e'}$, and inserts each one of them in one of the ensuing $\e'd_i$ time steps, swapping out suitable chunks of the old solution $\ALG_{t_i-1}\setminus I_t $: $\ALG_{t_i-1}^1, \dots, \ALG_{t_i-1}^{d_i\e'}$. We denote with $I_t^{:j}$ the union $I_t^1 \cup \dots \cup  I_t^j$, and with $\ALG_{t_i-1}^{j:}$ the union $\ALG_{t_i-1}^{j+1} \cup  \dots \cup \ALG_{t_i-1}^{d_i\e'}$, so that the solution maintained by the algorithm in the generic time step $t=t_i + j-1$ within a time window is 
    \(
        \ALG_{t} = X_t \cap (\CommonElements_t \cup I_t^{:j} \cup \ALG_{t_i-1}^{j:}).
    \)
    In Appendix~\ref{app:missing}, we argue that this can be done while maintaining feasibility of the solution. 
    \begin{algorithm}[!t]
            \caption*{A Framework for Fully Dynamic Consistent Algorithms}
            \begin{algorithmic}[1]
            \STATE \textbf{environment:} Stream $X$, function $f$, matroid $\cM$
            \STATE \textbf{Input:} Precision $\e$, robustness parameters  $\initial$ and $\target$
            \STATE \textbf{Routines:} $\scheduling$, Robust algorithm $\cA_{R}$, non-robust algorithm $\cA_{N}$
            \STATE $T_0, \dots, T_\ell \gets \scheduling$ with precision $\e$, initial and target robustness $d_0$ and $d_\ell$
            \STATE $t=0$ \hfill \COMMENT{First time step}
            \WHILE{$t< n$}
                \IF{$t\in T_i$ for some $i$}
                    \STATE Denote with $\ALG_{\text{old}}$ the current solution
                    \STATE $t' \gets \max\{\hat t: \hat t < t, \hat t \in T_{i-1}\}$
                    \STATE $(I_t,C_t) \gets \cA_R(I_{t'},X_t \setminus (X_{t'}\setminus C_{t'}),d_{i})$ 
                    \FOR{$j=0, \dots \e'd_i -1$}
                    \STATE $\ALG_t \gets X_t \cap (\CommonElements_t \cup I_t^{:j} \cup \ALG_{\text{old}}^{j:})$
                    \STATE $t \gets t+1$  \hfill \COMMENT{Next time step}
                    \ENDFOR
                \ELSE
                    \STATE $t' \gets \max\{\hat t: \hat t < t, \hat t \in T_{\ell}\}$ 
                    \STATE $\ALG^T_t \gets \cA_N(I_{t'},X_t \setminus (X_{t'}\setminus C_{t'}))$
                    \STATE $\ALG_t \gets \ALG^T_t \cap X_t$ \hfill \COMMENT{Actual Solution}
                    \STATE $t \gets t+1$ \hfill \COMMENT{Next time step}
                \ENDIF
            \ENDWHILE
            \end{algorithmic}
        \end{algorithm}

\section{Matroid Constraint}

    In this Section, we specialize the general framework described to matroid constraints, which we call \matroid. To this end, we need to specify the three building blocks. For \scheduling, it is enough to set the deletion robustness parameters: $\initial=\rank$, and $\target=\log \rank$, so that the number of robustness levels is $O({\log \rank}).$ 
    
    \paragraph{\rob.} The robust routine \rob starts from an initial solution $I$ and processes the elements in the candidate set $C$ received as input, according to the deletion parameter $d$; see also pseudocode.
    \begin{algorithm}[t!]
    \caption{\rob} \label{alg:robust}
    \begin{algorithmic}[1]
        \STATE \textbf{input:} Initial solution $I$, candidates $C$, robustness $d$.
        \WHILE{$|C| \ge \nicefrac{d}{\e}$}
            \FOR{$u \in C$}
                \STATE $\Gamma_u \gets \emptyset$
                \STATE \textbf{if $u + I \notin \cM$ then } $\Gamma_u \gets \{k \in I| I-k+u \in \cM\}$\label{line:Gamma_u}
                \STATE $k_u \gets \argmin\{w(k)|k \in \Gamma_u\}$
                \label{line:kappa_u}
            \ENDFOR
            \STATE $C \gets \{u \in C| f(u|I) \ge 2 w(k_u)\}$\label{line:C}
            \STATE \textbf{if } $|C| < \nicefrac d\e$ \textbf{then} \textbf{break}
            \STATE Sample $u$ from $C$ w.p. proportional to $\nicefrac{1}{f(u \mid I)}$
            \STATE $\weight{u} \leftarrow f(u \mid I)$ and $I \leftarrow I + u - k_u$         
        \ENDWHILE
        \STATE \textbf{return} $I$ and $C$
    \end{algorithmic}
    \end{algorithm}
    More in detail, \rob keeps sampling new elements from $C$ non-uniformly at random in such a way that the probability element $u \in C$ is sampled is inversely proportional to its marginal contribution to the current solution $f(u|I)$. If the sampled element is compatible with the current solution (i.e., $I+u \in \cM$), then we simply add it, otherwise we \emph{swap out} a low value element from the solution to make room for it. Before sampling another element, we filter out from $C$ all the elements whose marginal contribution with respect to the current solution is not large enough. As soon as the cardinality of $C$ drops below $\nicefrac{d}{\e}$, then the routine stops and outputs the new solution $I$, and the set of candidate elements $C$ that have survived the filtering stages but have still not been sampled. Following \citet{ChakrabartiK15}, we introduce the notion of weight $w$ of an element added to the solution, as its marginal contribution upon insertion. We assume that when we pass the initial solution $I$ as input, each element maintains its weight from previous levels. 
    
    \paragraph{\nonrob.} The non-robust routine is similar to the robust one, with the crucial difference that it processes \emph{all} the candidate elements $C$ in an arbitrary ordering, without resorting to sampling or anticipated stopping. When the generic $u \in C$ is processed, it is swapped in the solution only if its marginal contribution is at least twice that of the element it replaces. We refer to the pseudocode for the further details. 
    \begin{algorithm}[t!]
    \caption{\nonrob}\label{alg:non-robust}
    \begin{algorithmic}[1]
        \STATE \textbf{input:} Initial solution $I$ and candidate elements $C$.
        \FOR{$u \in C$}
                \STATE $\Gamma_u \gets \emptyset$
                \STATE \textbf{if $u + I \notin \cM$ then } $\Gamma_u \gets \{k \in I| I-k+u \in \cM\}$
                \STATE $k_u \gets \argmin\{w(k)|k \in \Gamma_u\}$
                \IF{$f(u|I) \ge 2 w(k_u)$}
                \STATE $\weight{u} \leftarrow f(u \mid I)$ and $I \leftarrow I + u - k_u$  
                \ENDIF
            \ENDFOR
        \STATE \textbf{return} $I$
    \end{algorithmic}
    \end{algorithm}
    
\subsection[]{Analysis of \matroid}

    We analyze the matroid algorithm. To this end, we introduce a crucial notion: the sequence of relevant times. The solution in any non-transition time can be seen as the output of a non-robust computation, and a sequence of robust ones, one for each level of robustness.
    Consider a generic index $t$ that does not correspond to a transition time (i.e., does not belong to any $T_i$ or transition window), the relevant times are defined recursively starting from $t_\ell$, which is the last transition time in $T_{\ell}$ before $t$. The generic $t_i$ is then the last transition time in $T_i$ before $t_{i+1}$, and so on, up to $t_0$. For each $i>0$, \matroid has built the solution $I_{t_i}$ and candidate set $C_{t_i}$ with a call to \rob starting with $I_{t_{i-1}}$ and $C_{t_{i-1}}$. An important observation: Once we compute an intermediate set $I_{t_i}$, we do not explicitly delete any element from it in the higher levels $t_{i+1}, \dots, t_{\ell}$. The solution $\ALG_t$ is maintained feasible by removing the deleted elements \emph{after} calling \nonrob. 

    We relate the value of the actual solution maintained by the algorithm $\ALG_t$, with the temporary one $\ALG_t^T$ \emph{before} removing the deleted elements. This result tells us that the layered sampling according to the inverse of the marginal contribution is indeed \emph{robust} with respect to the adversarial deletions performed by the stream. As a proxy for the value, we consider the weight function $w(\cdot)$, which is the linear extension of the weights of the single elements, computed upon insertion by \rob and \nonrob.

    \begin{lemma}\label{lem:deletion_rob_matroid}
        At a non-transition time $t$, the following holds:
        \(
            \E{\weight{\ALG_t}} \ge (1 - 8 \e) \E{\weight{\ALG^T_t}}.
        \)
\end{lemma}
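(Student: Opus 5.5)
The plan is to decompose $\ALG^T_t$ according to the routine that inserted each element, and to bound the deleted weight separately for each relevant time $t_0,\dots,t_\ell$. Every element of $\ALG^T_t$ was inserted either by \nonrob at time $t$ or by \rob at some relevant time $t_i$. Since $w$ is linear, we have $w(\ALG^T_t)-w(\ALG_t)=w(\ALG^T_t\setminus X_t)$, so it suffices to bound the expected weight of the elements of $\ALG^T_t$ deleted before $t$.

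The elements inserted by \nonrob at time $t$ are all drawn from $X_t$, so none of them is deleted. For level $i$, the key fact is that the elements inserted at $t_i$ are deleted between $t_i$ and $t$. By Lemma~\ref{lem:schedule}(ii), applied recursively along the relevant times, these deletions fall in a window of length at most about $2d_i$, because $t-t_i\le (t_{i+1}-t_i)+(t-t_{i+1})$ and the windows halve geometrically. In addition, the random shift $\tau$ makes the adversary's (oblivious) deletion times independent of where the sampling happens.

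Consider the run of \rob at $t_i$. At every iteration the filtered candidate set has size at least $d/\e$, and the sampled $u$ has probability proportional to $1/f(u\mid I)$. Its weight $w(u)=f(u\mid I)$ therefore contributes in expectation
\[
\sum_{u\in C} \frac{1/f(u|I)}{\sum_{v}1/f(v|I)}\, f(u|I)\,\ind{u \text{ deleted}} \;=\; \frac{|D\cap C|}{\sum_v 1/f(v|I)} .
\]
Compare this with the expected weight gained at the same iteration, which is $|C|/\sum_v 1/f(v|I)$. The expected deleted weight per step is then at most $\frac{|D\cap C|}{|C|}\le \frac{O(d_i)}{d_i/\e}=O(\e)$ times the expected gained weight. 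Summing over iterations, and using the standard per-step conditioning argument of \citet{ZhangTG22}, gives $\E{w(\text{deleted, inserted at } t_i)}\le c\,\e\, \E{w(\text{inserted at } t_i)}$ for a constant $c$ (e.g.\ $c=4$ for the $2d_i$ window, plus slack).

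Two points need care, and the second is where I expect the real difficulty. First, elements inserted at $t_i$ might have been swapped out later. They then do not appear in $\ALG^T_t$, and I want to bound only what survives. The fix is to charge each deleted surviving element to the inserted mass: deleted survivors are a subset of deleted insertions, while the total inserted weight at level $i$ must be related back to $w(\ALG^T_t)$. Following \citet{ChakrabartiK15}, the doubling rule $f(u|I)\ge 2w(k_u)$ implies that the total weight ever inserted is at most twice the final weight, since each removed element's weight is at most half of its replacement's, giving a geometric chain. This accounts for the factor $8=2\cdot 4$, and yields
\[
\E{w(\ALG^T_t\setminus X_t)}\le 4\e\cdot 2\,\E{w(\ALG^T_t)}.
\]
Second, the adversary deletes elements at arbitrary times, and the candidate sets at later levels depend on earlier randomness. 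I would handle this by conditioning on the history up to each iteration and using that the set $D$ of elements deleted within the relevant window is fixed by the oblivious stream. Given the shift $\tau$, $D$ is independent of the sampling coins, so the per-iteration bound holds conditionally and telescopes via linearity of expectation.
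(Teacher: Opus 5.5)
Your proposal is correct and is essentially the paper's own proof. Your per-level inverse-marginal sampling bound on the deleted weight of everything ever inserted is Claim~\ref{cl:U_after_deletions}: for the set $U$ of all elements ever inserted, the deleted weight is at most $4\e\,\E{\weight{U}}$, using $|D_i|\le 2d_i$ and $|C|\ge d_i/\e$ at every sampling step. Your doubling-chain bound is Claim~\ref{cl:swapping}, $w(K)\le w(\ALG^T_t)$, i.e.\ $w(U)\le 2w(\ALG^T_t)$, and together these give $\E{w(\ALG^T_t\setminus X_t)}\le \E{w(U\setminus X_t)}\le 8\e\,\E{w(\ALG^T_t)}$.

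One small correction: $D_i$ is independent of the sampling coins because the adversary is oblivious, so the bound holds for every fixed realization of the schedule. It does not come from the random shift $\tau$, which the paper only uses later to bound the probability that $t$ lands in a transition window.
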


\begin{proof}
    The temporary solution $\ALG_t^T$ is the outcome of $\ell+1$ iterative calls (forming a chain similar to composite functions) to \rob followed by one call to \nonrob. Then $\ALG_t$ is computed by simply removing all the elements in $\ALG_t^T$ that are no longer active. Some elements that are added to an intermediate solution $I_{t_i}$ may be swapped out at the same level or in future iterations. Let $U$ be the set of all elements that were ever added to one of these intermediate solutions $I_{t_i}$ or to the solution during the call to \nonrob. We start by arguing that this superset $U$ of $\ALG_t^t$ is robust against the deletions, lower bounding $\E{\weight{U'}}$ in terms of $\E{\weight{U}}$ where $U'$ is defined to be $U \cap X_t$. We have the following claim.
    \begin{restatable}{claim}{afterDeletions}
    \label{cl:U_after_deletions}
        It holds that \(\E{\weight{U'}} \ge (1 - 4 \e )\E{\weight{U}}\).
    \end{restatable}
    The proof is deferred to Appendix~\ref{app:missing}.
    Let $K$ be the set of elements in $U$ that were, at some point, swapped out by the algorithm to make room for a more valuable one, and denote with $K'$ the elements in $K$ that were not deleted by the stream $K'=K \cap X_t.$ With a simple telescopic argument, similar to \citet{ChakrabartiK15}, we can argue that the weight of $K$ is upper bounded by that of the dynamic solution. The proof is deferred to Appendix~\ref{app:missing}.
    \begin{restatable}{claim}{clswapping}
    \label{cl:swapping}
        For any realization of the random sampling, it holds: $w(K) \le w(\ALG_t^T)$
    \end{restatable}
    We then have the following chain of inequalities that concludes the proof of the Lemma:
    \begin{align*}
        &\E{w(\ALG_t)} \\
        &= \E{w(U') - w(K')} \tag{$w$ is linear}\\
        &\ge (1 \! - \! 4 \e) \E{\weight{U}} \! - \! \E{w(K')} \tag{By Claim~\ref{cl:U_after_deletions}}\\
        &=(1 \!- \! 4 \e) \E{\weight{\ALG_t^T} + \weight{K}} \! - \!\E{w(K')} \tag{As $U = \ALG_t^T \cup K$}\\
        &\ge (1 \! - \! 4 \e) \E{\weight{\ALG_t^T}} \! - \! 4 \e \E{w(K)} \tag{As $K' \subseteq K$}\\
        &\ge  (1 \! - \! 8 \e) \E{\weight{\ALG_t^T}},
    \end{align*}
    where the last inequality follows from Claim~\ref{cl:swapping}.
    \end{proof}

    We are ready to prove the desired approximation guarantee for non-transition times. 
    \begin{lemma}
    \label{lem:approx}
        At any non-transition time $t$, the following holds:
        \(
            \E{f(\ALG_t)}\ge \left(\nicefrac 14 - 2\e \right) f(\OPT_t).
        \)
    \end{lemma}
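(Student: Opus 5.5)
Following the Chakrabarti--Kale analysis of \swapping, I would route everything through the weight function $w$. The plan chains three inequalities: $f(\OPT_t)$ is bounded by a multiple of $w(\ALG_t^T)$; $w(\ALG_t)$ is within $(1-8\e)$ of $w(\ALG_t^T)$ in expectation by Lemma~\ref{lem:deletion_rob_matroid}; and $f(\ALG_t)\ge w(\ALG_t)$. This gives $f(\ALG_t)$ against $f(\OPT_t)$.

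\textbf{Step 1: $f(\ALG_t)\ge w(\ALG_t)$.} Order the elements of $\ALG_t$ by insertion time. Each $u$ had weight $f(u\mid I)$, where $I$ is the solution at its insertion. The earlier elements of $\ALG_t$ form a subset of that $I$, so by submodularity $f(u\mid \text{earlier elements of }\ALG_t)\ge w(u)$. Telescoping gives the claim, assuming $f(\emptyset)\ge 0$. This holds for any subset of the elements ever inserted, in particular $\ALG_t=\ALG_t^T\cap X_t$.

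\textbf{Step 2: $f(\OPT_t)\le 4\,w(\ALG_t^T)$ up to an $\e$-loss, for every realization.} Let $U$ be all elements ever inserted (as in the proof of Lemma~\ref{lem:deletion_rob_matroid}) and $K$ those swapped out. By monotonicity and submodularity,
\[
f(\OPT_t)\le f(U)+\sum_{o\in \OPT_t\setminus U} f(o\mid U).
\]
First, $f(U)\le w(U)=w(\ALG_t^T)+w(K)\le 2w(\ALG_t^T)$, using the same telescoping as Step 1 applied to $U$ in insertion order together with Claim~\ref{cl:swapping}.

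Second, each $o\in\OPT_t\setminus U$ falls into one of three cases:
\begin{itemize}
\item[(a)] $o$ was processed by \nonrob and rejected;
\item[(b)] $o$ was filtered out in line~\ref{line:C} of some call to \rob;
\item[(c)] $o$ was still sitting in a residual candidate set when \rob stopped.
\end{itemize}
Every active element is eventually handed to \nonrob, because the candidate sets $X_t\setminus(X_{t'}\setminus C_{t'})$ pass survivors upward. So case (c) collapses into case (a).

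In cases (a) and (b), at the rejection moment $f(o\mid I)<2w(k_o)$, where $k_o$ is the minimum-weight element of the circuit $o$ closes in $I$. If $I+o\in\cM$ then $w(k_o)=0$ and the marginal is $0$. Following Chakrabarti--Kale, I would build a charging map from $\OPT_t\setminus U$ to elements of $\ALG_t^T\cup K$. The exchange property of circuits (weights only increase along swap chains) gives injectivity onto elements of weight at least $w(k_o)$, with each element charged at most once per $\OPT$ element. This yields $\sum_o f(o\mid U)\le 2w(\ALG_t^T)$ plus a matching term for $K$. The resulting total is at most $4w(\ALG_t^T)$.

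\textbf{Step 3: combine.} Taking expectations,
\[
\E{f(\ALG_t)}\ge \E{w(\ALG_t)}\ge(1-8\e)\,\E{w(\ALG_t^T)}\ge \tfrac{1-8\e}{4}f(\OPT_t)=\left(\tfrac14-2\e\right)f(\OPT_t).
\]

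The main obstacle is Step 2's charging argument. The weights were assigned across several chained calls to \rob and \nonrob, with the solution $I$ evolving between them. Elements of $\OPT_t$ rejected at early levels must be charged against elements that may later have been swapped out, or (in expectation) deleted. I would handle this as in \citet{ChakrabartiK15}: maintain the invariant that the matroid span of the current solution together with its swapped-out elements only grows. I would then charge each rejected $o$ to an element of the final $\ALG^T_t$ via the chain of swaps, using that weights at least double along each swap so the charges form a geometric sum. Since the bound is realization-wise on $\ALG_t^T$, the randomness enters only through Lemma~\ref{lem:deletion_rob_matroid}.
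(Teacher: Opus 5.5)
\paragraph{Gap in Step 2: the rejected elements of $\OPT_t$ must be charged to $\ALG_t^T$ alone.} Most of your plan is the paper's. Step~1 is Claim~\ref{cl:f_vs_w}(i). The bound $f(U)\le w(U)\le 2w(\ALG_t^T)$ is the first half of Claim~\ref{cl:f_vs_w}(ii). Step~3 is the same chain through Lemma~\ref{lem:deletion_rob_matroid}. Working with $\OPT_t$ directly, instead of the paper's $O$ over $\hat X$, is legitimate: as you note, every element of $X_t$ is either inserted, filtered by \rob, or rejected by \nonrob.

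The gap is in the second half of Step~2. Since $f(U)$ already uses up $2w(\ALG_t^T)$, you need
\[
\textstyle\sum_{o\in\OPT_t\setminus U} f(o\mid U)\le 2w(\ALG_t^T)
\]
with \emph{no} contribution from $K$. Your own accounting (``$2w(\ALG_t^T)$ plus a matching term for $K$'', with total $4w(\ALG_t^T)$) does not close. Neither mechanism you sketch delivers the needed bound:
\begin{itemize}
\item Injectively charging each rejected $o$ to an element of $\ALG_t^T\cup K$ of weight at least $w(k_o)>w(o)/2$ gives only $\sum_o w(o)\le 2w(U)\le 4w(\ALG_t^T)$.
\item Forwarding charges along swap chains with doubling weights also gives $4w(\ALG_t^T)$, since a final element $a$ can absorb up to $2w(a)(1+\tfrac12+\tfrac14+\cdots)$.
\end{itemize}
Either way you end with $f(\OPT_t)\le 6w(\ALG_t^T)$, i.e.\ only a $\tfrac16(1-8\e)$ guarantee. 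The invariant you mention, the span of the current solution plus swapped-out elements, is just the span of all inserted elements. It is trivially monotone and carries no weight information.

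\paragraph{Missing idea: a threshold--span invariant on the current solution only.} For every $\theta\ge 0$, $\mathrm{span}\{x\in I: w(x)>\theta\}$ never shrinks over the whole chain of calls, from the level-$0$ \rob through \nonrob. This uses that intermediate solutions are never pruned of deleted elements. The reason is that $k_u$ is removed only when the other elements of its circuit, and the incoming $u$, all have weight at least $w(k_u)\ge 0$.

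Now take a rejected $o$ and set $w(o):=f(o\mid I)$ at its rejection. Then $w(o)<2w(k_o)$, so $o$ lies in the span of $\{x\in I: w(x)>w(o)/2\}$ at rejection. By the invariant it also lies in the span of $\{x\in\ALG_t^T: w(x)>w(o)/2\}$ at the end. Independence of $\OPT_t$ then gives, for every $\theta$,
\[
|\{o\in\OPT_t\setminus U: w(o)>2\theta\}|\le|\{x\in\ALG_t^T: w(x)>\theta\}|.
\]
Integrating over $\theta$ yields $\sum_o w(o)\le 2w(\ALG_t^T)$. Combined with $f(o\mid U)\le w(o)$, this is exactly what the paper obtains by invoking Theorem~1 of \citet{Badanidiyuru11} with $r=2$, after extending $w$ to all processed elements.

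A minor point: a rejection with $I+o\in\cM$ cannot occur, because then $o$ passes the test $f(o\mid I)\ge 2\cdot 0$. It is not a ``marginal $0$'' case.
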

    \begin{proof}
        The solution at any non-transition time $t$ depends on the call of \nonrob at that time, and on the recursive solutions computed by the last calls of \rob for the various levels of deletion robustness. To analyze the quality of $\ALG_t$, we introduce a superset $\hat X$ of $X_t$, composed of all the elements that were, at some point, processed by the algorithm in computing $\ALG_t$. $\hat X$ is the union of $X_{t_0}\setminus C_{t_0}$ (i.e., the elements processed al level $0$), of $X_{t_1}\setminus (X_{t_0} \cup C_{t_1})$ (i.e., the elements processed at level $1$), $\dots$ up to $X_{t_{\ell}}\setminus (X_{t_{\ell-1}} \cup C_{t_\ell})$ (i.e., the elements processed at level $\ell$). Finally,  $\hat X$ contains the elements processed by the last call of \nonrob: $(X_{t} \setminus X_{t_{\ell}}) \cup (C_{t_{\ell}} \cap X_{t})$. We denote with $O$ the best independent set in $\hat X$. We have that $X_t$ is contained in $(X_{t_{\ell}}\setminus C_{t_\ell}) \cup (X_{t} \setminus X_{t_{\ell}}) \cup (C_{t_{\ell}} \cap X_{t})\subseteq \hat X$, therefore $f(O) \ge f(\OPT_t)$, so we aim at upper-bounding $f(O)$.

        We have the following technical Claim, which relate the value of $\ALG_t$, $\ALG_T^T$ and of the optimal solution, and whose proof that can be found in Appendix~\ref{app:missing}.
        \begin{restatable}{claim}{fvsc}
        \label{cl:f_vs_w}
            The following inequalities hold true:
            \textnormal{(i)} $f(\ALG_t) \ge w(\ALG_t)$,
            \textnormal{(ii)} $4 \cdot w(\ALG_t^T) \ge f(O)$
        \end{restatable}
        We can then conclude the proof of the lemma: 
        \begin{align*}
        \E{f(\ALG_t)} & \ge \E{w(\ALG_t)} \tag{By (i) of Claim~\ref{cl:f_vs_w}} \\
        & \ge (1 - {8} \e) \E{\weight{\ALG_t^T}} \tag{By Lemma~\ref{lem:deletion_rob_matroid}} \\
        & \ge (1 - {8} \e)\frac{f(O)}4 \tag{By (ii) of Claim~\ref{cl:f_vs_w}}\\
        &\ge (1 - {8} \e )\frac{f(\OPT_t)}4 \qedhere
        \end{align*}
    \end{proof}
        
\begin{lemma}
\label{lem:consistenty}
    The algorithm's consistency is $O(\nicefrac{\log \rank}{\e^2})$.
\end{lemma}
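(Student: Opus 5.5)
We bound $|\ALG_t \triangle \ALG_{t-1}|$ separately for steps inside a transition window and for non-transition steps, and take the maximum.

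\emph{Transition windows.} At a transition time $t_i\in T_i$ the new target $I_{t_i}$ and the old solution $\ALG_{t_i-1}$ are both independent, so each has at most $\rank$ elements and $|I_{t_i}\setminus \ALG_{t_i-1}|\le \rank$. The framework splits this set into $\e' d_i$ chunks of equal size, and it splits $\ALG_{t_i-1}\setminus I_{t_i}$ the same way. So each step of the window inserts at most $\lceil \rank/(\e' d_i)\rceil$ elements and removes at most as many; the stream itself may delete one more element. Here $\e'=\e/|\cL|$ with $|\cL|=O(\log \rank)$ and $d_i\ge \target=\log\rank$. The per-step change is therefore $O(\rank|\cL|/(\e d_i))$. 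This is small only when $d_i$ is close to $\rank$, so this crude bound fails for the low levels and a finer count is needed.

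\emph{Key step: bound $|I_{t_i}\setminus \ALG_{t_i-1}|$ by $O(d_i/\e)$.} Just before $t_i$, the solution equals (up to deletions) the output of the chain whose level-$i$ anchor is the previous element of $T_i$. The new $I_{t_i}$ is computed by \rob starting from $I_{t'}$ with $t'\in T_{i-1}$, which is also an ancestor of the previous chain. I would show that the number of elements \rob adds at level $i$ is at most the candidate-set size at level $i$. That set is $C_{t_{i-1}}\cap X_{t_i}$ together with the elements inserted since $t_{i-1}$. The first part has size $<d_{i-1}/\e=2d_i/\e$ by the stopping rule. The second part has size at most $t_i-t_{i-1}\le d_{i-1}$ by Lemma~\ref{lem:schedule}(ii). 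Each \rob iteration adds one element, so at most $O(d_i/\e)$ elements are added.

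Likewise, the old solution differs from $I_{t'}$ only through the additions made at levels $i,\dots,\ell$ and by \nonrob. Because $d_j$ halves at each level, the level-$j$ contributions sum geometrically to $O(d_i/\e)$. The \nonrob part is at most the size of its candidate set, $O(\target/\e)+O(d_\ell)=O(d_i/\e)$. Swaps remove exactly one element per addition, so the symmetric difference is also $O(d_i/\e)$. Spreading this over $\e' d_i$ steps gives $O(1/(\e\e'))=O(\log\rank/\e^2)$ changes per step.

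\emph{Non-transition steps.} Consecutive non-transition times $t-1,t$ share the same anchors $t_0,\dots,t_\ell$, so the chain of \rob outputs is identical. For \nonrob, I would compare its output on the candidate set at time $t$ with its output at time $t-1$. The candidate set has size $O(d_\ell/\e)+(t-t_\ell)=O(\log\rank/\e)$, and \nonrob adds at most one element per candidate and swaps out one element per addition. The two outputs therefore differ in $O(\log\rank/\e)$ elements, plus one for the stream deletion.

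The same bound covers the first step after a window ends: there the solution moves from $I_{t_i}$ (after deletions) to the \nonrob output built on top of it, which differs by $O(\log\rank/\e)$. A step where a window starts right after a non-transition step is already accounted for in the window analysis. The main obstacle is the key step above, namely showing that the old solution and $I_{t_i}$ share the common ancestor $I_{t'}$ and differ only through $O(d_i/\e)$ additions.
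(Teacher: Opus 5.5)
Your proof is correct and follows essentially the same route as the paper. It uses the same split into transition and non-transition steps, the same stability fact that a routine call changes $O(|C|)$ elements, candidate-set sizes bounded via the stopping rule and the schedule spacing, and, inside a window, a comparison of $\ALG_{t_i-1}$ and $I_{t_i}$ through the common level-$(i-1)$ ancestor, with a geometric sum over levels spread across the $\e' d_i$ window steps.

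The only imprecision is the case where $t_i$ is the first level-$i$ time spawned by $t'$, so it immediately follows the level-$(i-1)$ window of $t'$. There the old solution is $I_{t'}$ itself up to deletions, not a chain anchored at the previous element of $T_i$, whose level-$(i-1)$ ancestor is an earlier time. The paper treats this case separately, and it only makes your bound easier.
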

\begin{proof}
To establish the consistency guarantee of \matroid, we analyze the changes during transition periods and non-transition periods separately. We start by observing that the routines \rob or \nonrob decreases by at least $1$ the cardinality of the candidate set every time that they add a new element to the solution. 
\begin{claim}[Stability of the routines]
\label{cl:stability}
    Consider any initial solution $I$ and candidate set $C$. The outputs of $\nonrob$ and $\rob$ on $I$ and $C$ are $2|C|$-consistent with respect to the initial solution.
\end{claim}
Consider a generic stream operation $t$, we have two cases. 
\paragraph{Case 1: Non-Transition Periods.}
At a time $t$ that is not within any transition window, the algorithm computes the solution $ALG_t$ by running \nonrob. This is initialized with the solution $I_{t'}$ from the last transition time $t' \in T_{\ell}$, and the candidate set of elements $(X_t \setminus X_{t'}) \cup (C_{t'} \cap X_t)$. We now have two cases. 

-- If the previous time step was a transition period corresponding to level $\ell$. More precisely, it was the last time step of a time window of length $ \e' d_{\ell}.$  The solution $\ALG_{t-1}=I_{t'}$. We have that $|C_{t'}| \le \nicefrac{d_{\ell}}\e$ and $X_t \setminus X_{t'} \le \e' d_{\ell}$, therefore at most $O(\nicefrac{d_{\ell}}\e)=O(\nicefrac{\log \rank}\e)$ elements can change during the call of \nonrob at time $t$, by Claim~\ref{cl:stability}

-- If the previous step is also a non-transition period, then $\ALG_{t-1}$ is the result of \nonrob on initial set $I_{t'}$ and candidate set $(X_{t-1} \setminus X_{t'}) \cup (C_{t'} \cap X_{t-1})$. The last transition $t'$ is at distance $\target = \log \rank$ (Lemma~\ref{lem:schedule}, point (ii)), and $|C_{t'}|$ has cardinality at most $\nicefrac{\target}{\e}$; therefore $(X_{t-1} \setminus X_{t'}) \cup (C_{t'} \cap X_{t-1})$ has cardinality at most  $O(\nicefrac{\log \rank}{\e})$. We thus have the following (by Claim~\ref{cl:stability}):$|\ALG_{t} \triangle \ALG_{t\!-\!1}|$ is upper bounded by 
  $|\ALG_{t} \triangle I_{t'}|\! +\!| I_{t'} \triangle \ALG_{t-1}| \in O\left(\frac{\log \rank}{\e}\right).$

\paragraph{Case 2: Transition Periods.}
Consider the case in which $t$ belongs to a transition period. The corresponding period starts at $t_i \in T_i$ and has duration $\e' d_i$, where $\e' = O(\nicefrac{\e}{\log \rank})$. During this period, the algorithm gradually transitions from the previous solution, $ALG_{t_i-1}$, to the newly computed $d_i$-robust solution, $I_{t_i}$, which is the output of \rob on initial set $I_{t_{i-1}}$ and candidate set $(X_{t_i} \setminus X_{t_{i-1}}) \cup (C_{t_{i-1}} \cap X_{t_i})$. Denote with $t_{i-1}$ the last time that the level $i-1$ has been recomputed, and let $I_{t_{i-1}}$ be the output of that call of \rob at level $i-1$ (if $t_{i-1}$ is not well defined, then this is the empty set). We have two cases

-- If the time step before $t_i$ is a transition period, then it corresponds to level $i-1$, therefore $\ALG_{t_i-1} =I_{t_{i-1}}$, by Claim~\ref{cl:stability}, the design of $C'$ and Lemma~\ref{lem:schedule} point (ii), we have the following:
    \begin{align*}
        |\ALG_{t_i-1} \triangle I_{t_i}| &\in O\left(|(X_{t_i} \setminus X_{t_{i-1}}) \cup (C_{t_{i-1}} \cap X_{t_i})|\right) \\
        &\in O\left(\frac{d_{i-1}}{\e}\right).
    \end{align*}
    During the $\e' d_i \in O(d_i \cdot \nicefrac \e{\log \rank} )$ time steps in the transition window of $t_i$ we are then moving from $\ALG_{t_i-1}$ to $I_{t_i}$, that differ by at most $O\left(\nicefrac{d_{i-1}}{\e}\right)$ elements, therefore we need to switch $O(\nicefrac{\log \rank}{{\e^2}})$ elements in each time step of that time window.

-- The other case corresponds to the situation in which the time step before $t_i$ is not a transition time. The solution $\ALG_{t_i-1}$ is then the output of \nonrob on a suitable base solution and candidate set of elements. We need to recursively unroll the sequence of changes from $I_{t_{i-1}}$ up to $\ALG_{t_i-1}$. Starting from $\hat t_{\ell}$ (i.e., the last time in $T_\ell$ before $t_i$), then $\hat t_{\ell-1}$ (i.e., the last time in $T_{\ell-1}$ before $\hat t_{\ell}$), up to $\hat t_i$, which is the transition time at level $i$ used in the construction of $\ALG_{t_i-1}$. Finally, denote with $t_{i-1}$\footnote{We adopt the convention that, if $i=0$, then $I_{t_{i-1}} = \emptyset$.} the last transition time in level $T_{i-1}$. We have the following:
    \begin{align*}
        |\ALG_{t_i-1} \triangle I_{t_i}| \le& |\ALG_{t_i-1} \triangle I_{\hat{t}_\ell}|\! + \!\!\!\sum_{j=i+1}^\ell \!\!|I_{\hat{t}_{j}} \triangle I_{\hat{t}_{j-1}}| 
        \\
        & + |I_{\hat t_i}\triangle I_{t_{i-1}}| + |I_{t_{i+1}}\triangle I_{t_{i}}| \\
        \le& \sum_{j=i}^\ell \frac{2}{\e} O(d_i) \in O\left(\frac{\rank}{\e} 2 ^{-i}\right).
    \end{align*}
    During the $\e' d_i \in O( \nicefrac {(\e \rank)}{(2^i\log \rank)} )$ time steps in the transition window of $t_i$ we are then moving from $\ALG_{t_i-1}$ to $I_{t_i}$, that differ by at most $O\left(\nicefrac{\rank}{\e \cdot 2^i}\right)$ elements, therefore we need to switch $O(\nicefrac{\log \rank}{{\e^2}})$ elements in each time step of that time window. \qedhere
    \end{proof}

    \begin{theorem}
        \matroid maintains a $(\nicefrac 14 -7\e)$-approximation of the dynamic optimum and is $O(\nicefrac {\log \rank}{\e^2})$-consistent. 
    \end{theorem}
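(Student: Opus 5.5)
The consistency part is immediate from Lemma~\ref{lem:consistenty}, so the plan focuses on the approximation guarantee. We split time steps into non-transition times and times inside a transition window. For non-transition times, Lemma~\ref{lem:approx} already gives $\E{f(\ALG_t)} \ge (1-8\e)\tfrac{f(\OPT_t)}{4} = (\tfrac14 - 2\e) f(\OPT_t)$, which is stronger than $\tfrac14 - 7\e$. The remaining work is to handle times $t$ inside a transition window. We argue in expectation over the random shift $\tau$ of \scheduling.

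\textbf{Key observation.} The adversary is oblivious, so the stream, and hence the sequence $\OPT_t$, is fixed in advance. The random cyclic shift $\tau$ is uniform over $\{1,\dots,n\}$. For a fixed $t$, the event ``$t$ lies in some transition window'' therefore has probability equal to the fraction of shifted positions that are transition positions. By Lemma~\ref{lem:schedule}(iii) this is at most $\e$. We then condition on the event $N_t$ that $t$ is a non-transition time.

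The argument of Lemma~\ref{lem:approx} (through Lemma~\ref{lem:deletion_rob_matroid} and Claim~\ref{cl:f_vs_w}) must then be re-run conditionally on $N_t$. To do so, we condition on $\tau$ first. Given $\tau$, the relevant times $t_0,\dots,t_\ell$ are deterministic. The only remaining randomness is the sampling inside \rob, and Claim~\ref{cl:U_after_deletions} must hold for every fixed schedule. This gives, for every $\tau$ for which $t$ is non-transition,
\[
\E{f(\ALG_t) \mid \tau} \ge \left(\tfrac14 - 2\e\right) f(\OPT_t).
\]
Averaging over $\tau$ with $f\ge 0$ yields
\[
\E{f(\ALG_t)} \ge \P{N_t}\left(\tfrac14-2\e\right) f(\OPT_t) \ge (1-\e)\left(\tfrac14 - 2\e\right) f(\OPT_t).
\]

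\textbf{Main obstacle.} This conditional version is the delicate step. The proof of Claim~\ref{cl:U_after_deletions} likely uses that the deletions between $t_i$ and $t$ number at most $d_i$ (Lemma~\ref{lem:schedule}(ii)) against a fixed robust sampling, and that bound holds for each realization of the schedule. If the appendix proof instead averages over $\tau$, I would redo it per-$\tau$. In that proof, the set of elements deleted in $(t_i,t]$ is fixed once $\tau$ is fixed, and the inverse-marginal sampling with $|C|\ge d_i/\e$ gives loss at most $O(\e)$ per level. The per-level losses are summed using the geometric decrease of $d_i$.

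The constants combine as $(1-\e)(\tfrac14-2\e) \ge \tfrac14 - \tfrac94\e \ge \tfrac14-7\e$. This leaves slack for any loss from the conditioning, for example if the conditioning degrades the $8\e$ of Lemma~\ref{lem:deletion_rob_matroid}, or if the transition-window solution's value needs separate handling. As a fallback, a transition-window $\ALG_t$ contains $X_t\cap \CommonElements_t$ plus partial chunks, and one could lower-bound it via monotonicity by whichever endpoint-solution portion survives. Since that route needs feasibility-preserving chunk pairing (Appendix~\ref{app:missing}), I would prefer the probabilistic averaging above.
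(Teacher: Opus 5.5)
Your proposal is correct and follows the paper's argument. The uniform random shift together with Lemma~\ref{lem:schedule}(iii) makes $t$ a non-transition time with probability at least $1-\e$, and Lemma~\ref{lem:approx}, which holds for every fixed schedule because the adversary is oblivious, then gives $(\tfrac14-2\e)f(\OPT_t)$ conditionally. Dropping the transition-window term by $f\ge 0$ yields $(1-\e)(\tfrac14-2\e)\ge \tfrac14-7\e$, and consistency comes from Lemma~\ref{lem:consistenty}.

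Your contingency about redoing Claim~\ref{cl:U_after_deletions} per $\tau$ is unnecessary. If you did redo it, no geometric summation over levels would be needed, since the loss there is bounded by $4\e$ relative to each $w(A_i)$ separately.
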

    \begin{proof}
        Consider the generic time step $t$, we argue that the solution $\ALG_t$ maintained is, in expectation with respect to the randomness of the algorithm, a $\nicefrac 14 - O(\e)$ approximation. First, we argue that $t$ is a non-transition time with probability at most $1-\e$. This descends from the fact that there are exactly $(1-\e)n$ non-transition time steps computed in \scheduling, and that the random shift is chosen uniformly at random. We then have the following
        \begin{align*}
            \E{f(\ALG_t)} &\ge (1\!-\!\e)\E{f(\ALG_t)| t\text{ not transition time}}\\
            & \ge (1\!-\!\e)\left(\tfrac{1}{4} \!-\! 2\e \right) \!f(\OPT_t)\tag{Lemma~\ref{lem:approx}}\\
            &\ge \left(\tfrac{1}4\!-\!3\e\right) f(\OPT_t).
        \end{align*}
        The consistency guarantees follow directly from Lemma~\ref{lem:consistenty}, this concludes the proof. 
    \end{proof}

\paragraph{Update time.}
{We measure complexity via the number of calls to the value and matroid independence oracles.}
Scheduling operations do not query these oracles, so we focus on the online operations. Omitting lower-order terms, the polynomial dependencies on $n$ and $\rank$ are as follows.
For a generic level $i$, the robust routine is called $\Theta(2^i \cdot \nicefrac{n}{\rank})$ times, each associated with a transition window of $\Theta(\nicefrac{\rank}{2^i})$ time steps. In a generic call to Robust Swap on a candidate set of cardinality $n_i$, the outer while-loop executes at most $n_i$ times because the candidate set $C$ is only shrinking in each iteration. This entails at most $n_i$ value oracle queries per loop, plus the computation of the $k$ initial marginals, yielding $O(n_i^2 + k)$ value oracle calls. The matroid independence oracle is called at most $O(n_i^2 \cdot k)$ times.  Since these recurring costs repeat every $\Omega(k)$ time steps, 
the amortized value and independence oracle calls of level $i$ per update are respectively $O(\nicefrac{n_i^2}{\rank})$ and $O(n_i^2)$. 
The exponential drop in $n_i$ values as $i$ grows makes the aggregate over all levels be dominated by the first term 
$n_0 \in \Theta(n)$. 
The corresponding oracle calls for the non-robust routine are also lower order terms because of the same exponential decay structure in place. So the overall amortized update times are $O(\nicefrac{n^2}{\rank})$ and $O(n^2)$ for value and independence oracles, respectively. 
\section{Cardinality Constraint}

    The adaptation of our framework to cardinality, \cardinality, is simpler than its matroid counterpart as it exploits the special structure of cardinality constraint to achieve stronger approximation and constant consistency. 

    The most notable difference is that we only need \emph{one} single level of robustness, namely $\initial=\target=\e^2\rank=d$. The robust routine \RobustGreedy thus takes as input no initial solution (as there is \emph{no} intermediate solution inherited from an upper robustness level), but a set of candidate elements, and the single deletion robustness $d$. It constructs a candidate solution $I$ greedily, one element after the other: as long as there are still more than $\nicefrac d\e$ candidate elements with positive marginal contribution, it adds a new element sampling non-uniformly from $C$, so that the probability of $u \in C$ being sampled is inversely proportional to $f(u|I).$ We refer to the pseudocode for details. 
    
    By design, the solution computed by \RobustGreedy has cardinality at most $\rank - \nicefrac {2d} \e$. The candidate set passed to the non-robust routine has cardinality always at most $2 \e \rank$, in fact, the $C$ computed in \RobustGreedy contains at most $\nicefrac d \e = \e \rank$ elements, while there are at most $\e \rank$ insertions between the call of \RobustGreedy and the subsequent call to the non-robust routine (note, infact, that the transition windows for the single robustness level are equally spaced at distance $d$). All in all, the \emph{non-robust routine} is therefore extremely simple: just add all the candidate elements as input to the solution $I$. 

    We report here the properties of the algorithm, and refer to \Cref{app:cardinality} for a formal proof. 

    \begin{restatable}{theorem}{cardinalitythm}
        \cardinality with precision parameter $\e \in (0,\nicefrac 12)$ maintains a $\nicefrac 12 - 3\e$ approximation of the dynamic optimum and is $O(\nicefrac 1 \e^2)$-consistent.
    \end{restatable}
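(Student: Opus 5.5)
The plan follows the matroid analysis: first treat non-transition times, then absorb transition windows via Lemma~\ref{lem:schedule}(iii) and the uniformly random shift, and finally count consistency using the single level $d=\e^2\rank$.

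\textbf{Approximation at a non-transition time $t$.} Let $t_0$ be the last transition time before $t$; by Lemma~\ref{lem:schedule}(ii), $t-t_0\le d$. Write the greedy sequence built by \RobustGreedy as $u_1,\dots,u_m$, with $I_j=\{u_1,\dots,u_j\}$. The argument has two steps.

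\begin{itemize}
\item \emph{Deletion robustness.} As in Claim~\ref{cl:U_after_deletions}, at each step the element is sampled from a candidate set of size at least $d/\e$, with probability proportional to $1/f(u\mid I)$. The oblivious adversary deletes at most $d$ elements in the window $(t_0,t]$. Hence the expected marginal gain lost at step $j$ is at most an $O(\e)$ fraction of the gain $f(u_j\mid I_{j-1})$. Summing over $j$ via the telescoping of marginals gives $\E{f(I\cap X_t)}\ge(1-O(\e))\E{f(I)}$. By monotonicity the same holds after adding the non-robust candidates.
\item \emph{Threshold-style bound.} This is the step I expect to be the main obstacle. Fix $o\in\OPT_t$. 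If $o$ was processed and not selected, then either $o\in C$ at termination, and so is added by the non-robust routine since it is active, or $o$ lost positive marginal. The hard part is bounding $f(o\mid \ALG_t)$ for elements that remained available while greedy picked others by inverse-marginal sampling.
\end{itemize}

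For the threshold step I would try the following. When sampling inversely to marginals from a set of size at least $d/\e$, the sampled element's expected marginal is at least a harmonic-mean-type quantity. Then, for the step-$j$ threshold $\theta_j$ defined as the $(d/\e)$-th largest marginal, at most $d/\e$ candidates exceed it. Concretely, I aim to show
\[
\sum_{o\in \OPT_t\setminus \ALG_t} f(o\mid \ALG_t)\le f(\ALG_t)+O(\e) f(\OPT_t).
\]
The argument splits into two cases:
\begin{itemize}
\item \emph{Solution is full.} If $|\ALG_t|\ge(1-O(\e))\rank$, then each of its marginals dominates, up to $(1-O(\e))$, an averaged threshold. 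So $f(\ALG_t)\gtrsim \rank\cdot\theta$ while $f(\OPT_t\mid\ALG_t)\le \rank\theta+$ (mass of the at most $d/\e$ above-threshold elements, which are all added at the end anyway).
\item \emph{Solution is not full.} Greedy stopped because fewer than $d/\e$ candidates had positive marginal, and those were added. Then all of $\OPT_t$ has zero marginal, apart from deleted elements, which are irrelevant.
\end{itemize}
Combining the two cases gives $f(\OPT_t)\le f(\ALG_t)+f(\OPT_t\mid \ALG_t)\le (2+O(\e))f(\ALG_t)$, i.e.\ a $\nicefrac12-O(\e)$ bound, and the robustness step loses only $O(\e)$ more. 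The cardinality is fine because $|I|\le \rank-2d/\e$ and at most $2\e\rank$ candidates are appended.

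\textbf{Transition times and consistency.} By Lemma~\ref{lem:schedule}(iii) and the random shift, $t$ is non-transition with probability at least $1-\e$. This gives $(1-\e)(\nicefrac12-2\e)\ge\nicefrac12-3\e$, after tuning the constants in the robustness step.

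For consistency there are two regimes:
\begin{itemize}
\item \emph{Non-transition steps.} Consecutive solutions share the base $I$ and differ only in appended candidates. The candidate set changes by one element per operation, so the change per step is $O(1)$.
\item \emph{Transition windows.} The old and new solutions differ in at most $\rank$ elements, spread over $\e' d=\e^3\rank$ steps (here $|\cL|=1$, so $\e'=\e$). This yields $O(1/\e^3)$ changes per step.
\end{itemize}
To reach $O(1/\e^2)$ I would instead set $d=\e\rank$, or rescale, noting that the bound stated needs the window length to be $\Theta(\e^2\rank)$. I would fix the parameters so that $\rank/(\e d)=O(1/\e^2)$.
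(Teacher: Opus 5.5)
\textbf{The genuine gap is the consistency bound.} Your window count is correct for the algorithm as specified. With a single level $\e'=\e$, a window has $\e d=\e^3\rank$ steps, and spreading up to $\rank$ changes over it gives $O(1/\e^3)$. The paper's proof reaches $O(1/\e^2)$ only by asserting that a window contains $\e^2\rank$ indices, which does not match \scheduling. So the discrepancy you found is real.

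Your remedy does not work, though. Setting $d=\e\rank$ makes the sampling pool $d/\e=\rank$ and the iteration count $\rank-2d/\e$ negative, and the appended candidates plus arrivals can exceed $\rank$. Rescaling cannot help either. The robustness loss ($\approx d/\text{pool}$), the budget loss ($\approx\text{pool}/\rank$) and the transition loss ($\approx\text{window}/d$) multiply to $\text{window}/\rank$. Hence $\rank/\text{window}$ cannot drop below order $1/\e^3$ while all three losses stay $O(\e)$.

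There is a second gap, which the paper's one-line argument shares. Your two regimes do not cover the step from the end of a window to the first non-transition time. The window moves only to $I_{t'}$. At the next step $\cA_N$ then appends, all at once, $C_{t'}\cap X_t$ (up to $d/\e=\e\rank$ elements) and everything inserted during the window. The matroid proof handles exactly this case, where it costs only $O(d_\ell/\e)$. Here you would have to fold $C_{t'}$ and new arrivals into the transition and recheck the cardinality budget. Even with both points handled, this analysis supports only $O(1/\e^3)$ consistency, i.e.\ $O(1/\e^2)$ with a $\nicefrac12-O(\e^{2/3})$ guarantee, not the statement as written.

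\textbf{The approximation argument is fine in outline.} It follows the paper's route: inverse-marginal sampling from a pool of $d/\e$ elements against at most $d$ deletions, a threshold bound split into full and not-full cases, then the random shift. The step you flag as the obstacle needs no averaging or harmonic-mean estimate. Set $\theta=\min_{x\in C_{t'}}f(x\mid I_{t'})$. The $d/\e$ elements of $C_{t'}$ lie in $C$ at every step $j$, so by submodularity $\theta$ is at most the $(d/\e)$-th largest marginal at step $j$. Every element of $C'_j$, in particular the sampled one, has marginal at least that value, so $f(I_{t'})\ge|I_{t'}|\theta$ holds deterministically. Meanwhile every $x\in X_t\setminus\ALG^T_t$ satisfies $f(x\mid\ALG^T_t)\le f(x\mid I_{t'})\le\theta$. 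Run this against $\ALG^T_t$, not $\ALG_t$, because after deletions the marginals are no longer bounded by $\theta$. This gives $f(\OPT_t)\le\frac{2-2\e}{1-2\e}f(\ALG^T_t)$, and only then should you apply the robustness bound in expectation.
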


        \begin{algorithm}[ht!]
        \caption{\RobustGreedy}
        \label{alg:rebustgreedy}
        \begin{algorithmic}[1]
        \STATE \textbf{input:} Candidate elements $C$ and robustness $d$
        \STATE $I \gets \emptyset$
        \FOR{$\rank - \nicefrac{2d}{\e}$ iterations}
            \STATE $C \gets \{u \in C | f(u | I) > 0\}$ 
            \STATE \textbf{if} $|C| \le \nicefrac{d}{\e}$ \textbf{then} break
            \STATE $C' \gets$ $\nicefrac{d}{\e}$ elements with largest $f(u | I)$ in $C$
            \STATE Sample $u \in C'$ w. p. proportional to $\nicefrac1{f(u | I)}$
            \STATE $I \gets I + u    $
        \ENDFOR
        \STATE \textbf{return} $I, C$ 
        \end{algorithmic}
    \end{algorithm}

\paragraph{Update time.}
We call the robust submodular routine $O(\nicefrac{n}\rank)$ times, with each call entailing $O(n \cdot k)$ value oracle queries. The outer for-loop of \RobustGreedy repeats $O(k)$ times, and each call computes the marginals of all elements. The non-robust routine issues no oracle calls. Thus, the amortized update time is $O(n)$.  


\section*{Acknowledgments}
    The work of Federico Fusco was supported in part by the MUR PRIN grant 2022EKNE5K (Learning in Markets and Society). The work of Ola Svensson was supported by the Swiss State Secretariat for Education, Research and Innovation (SERI) under contract number MB22.00054.

\bibliography{references}
\bibliographystyle{icml2026}

\newpage
\appendix
\onecolumn

\section{Missing Proofs}
\label{app:missing}

    This Appendix is devoted to providing the missing proofs from the main body. We start by formally explaining how to transition from two solutions withing a transition window, as presented at the end of \Cref{sec:template}.

    \paragraph{More on feasible transition.} We are given two feasible sets: the ``new solution'' $I_t$, and the ``old'' one $\ALG_{t_i-1}$, and we have to transition from the latter to the former, using $\e'd_i$ time steps, \emph{while maintaining feasibility} and spreading the changes uniformly within the time steps. Stated differently, we want to find the following partitions:
    \begin{itemize}
        \item $\ALG_{t_i-1} \setminus  I_t=\ALG_{t_i-1}^1\cup \dots \cup \ALG_{t_i-1}^{d_i\e'}$, with $|\ALG_{t_i-1}^1| = \dots = |\ALG_{t_i-1}^{d_i\e'}| = \tfrac{|\ALG_{t_i-1} \setminus  I_t|}{\e'd_i}$
        \item $I_t\setminus \ALG_{t_i-1}=I_t^1\cup \dots \cup I_t^{d_i\e'}$, with $|I_t^1| = \dots = |I_t^{d_i\e'}| = \tfrac{|\ALG_{t_i-1} \setminus  I_t|}{\e'd_i}$
    \end{itemize}
     Denote with $I_t^{:j}$ the union $I_t^1 \cup \dots \cup  I_t^j$, and with $\ALG_{t_i-1}^{j:}$ the union $\ALG_{t_i-1}^{j+1} \cup  \dots \cup \ALG_{t_i-1}^{d_i\e'}$ for generic $j$. The solution maintained by the algorithm in the generic time step $t=t_i + j-1$ within a time window is 
    \[
        \ALG_{t} = X_t \cap (\CommonElements_t \cup I_t^{:j} \cup \ALG_{t_i-1}^{j:}).
    \]

    By the strong exchange property of matroids \citep[e.g.,][]{Schrijver03} it is indeed possible to find the partitions so that the solution $\ALG_t$ stays feasible.

     \paragraph{Maintaining a feasible solution.}

\schedulingLemma*
    \begin{proof}
        We prove the three statements for the time indices \emph{before} the random shift, as they are not affected by it. 
        
        The statement (i) is clearly true for level $0$, as each transition window contains $\e'd_0$ times, and two consecutive windows are spaced by $d_0$ indices (remember, $\e \in (0,1)$). For the generic level $i$, we insert two transition windows of length $\e'd_i$ equally spaced between a transition windows for level $i-1$, and one for some level $j <i$, posed at distance $\Delta_{i-1}=(1-(i-1)\e')d_{i-1}$, so there is no overlap (as $i \e' \le \e \le 1$).

        We move our attention to property (ii) and consider a generic non-transition time step $t$ and robustness level $i$. We study the case $i=0$ separately, and then address the other levels. The non-transition time $i$ belongs to one of the chunks of length $\initial$ between two consecutive transition times for level $0$, therefore, its distance to the previous such transition time is at most $\initial$. Consider now the general level $i$. Index $t$ belongs by construction to an interval between a transition window for level $i-1$ and one for level $j$, for some $j<i$. By design there are at most $d_{i-1}$ time steps between these two windows, and the interval is divided in two by two transition times of level $i$, so that each non-transition index within has distance at most $\nicefrac{d_{i-1}}2=d_i$. The corner cases for the last indices for generic $T_i$ are analogous.
        
        Consider the last point (iii). We restrict our attention to the first $d_0$ time indices, as the fraction is maintained the same in each integer interval of the type $[jd_0,(j+1)d_0)$. Each transition window for level $i$ contains $\e'd_i$ indices, and there are $\nicefrac{d_0}{d_i}$ of them. All in all, the number of indices in transition windows within the first $d_0$ time indices is:
        \[
            \sum_{i=0}^\ell \frac{d_0}{d_i}\e' d_i =   \sum_{i=0}^\ell \e' d_0 = \e d_0.
        \]
        This concludes the proof.
    \end{proof}

    \afterDeletions*
    \begin{proof}[Proof of Claim~\ref{cl:U_after_deletions}]
    Set $U$ is the union of $\ell+1$ sets $A_0, A_1, \cdots, A_{\ell}$ and $B$ defined as follows. $A_i$ denotes all elements added to the intermediate set when algorithm \rob was constructing $I_{t_i}$, while $B$ is the set of elements that were ever added to the solution in the call to \nonrob. We upper bound the loss due to deletions with respect to each of these $\ell+2$ components. So we have:
    \[
        \weight{U} = \weight{B} + \sum_{i=0}^{\ell} \weight{A_i}, \,\,
        \weight{U'} = \weight{B} + \sum_{i=0}^{\ell} \weight{A'_i} 
    \]
    where $A'_i = A_i \cap X_t$. Note, $B$ is already contained in $X_t$ by design of the algorithm, as the set $C$ of candidate elements passed to \nonrob only contains active elements $X_t$. To prove Claim~\ref{cl:U_after_deletions}, it suffices to lower bound $\weight{A'_i}$ in terms of  $\weight{A_i}$. More precisely, by linearity of expectation, we have that\footnote{We use that $\ind{A}$ is the indicator random variable of event $A$, and value $1$ is $A$ is realized, and $0$ otherwise.}
    \begin{align} 
        \E{\weight{A'_i}} &= \E{\sum_{v \in A_i} \weight{v} (1 - \ind{v \in D_i}) }\notag \\
        &= \E{\weight{A_i}} - \E{\sum_{v \in A_i} \weight{v} \ind{v \in D_i} }\label{eq:A_vs_A'}
    \end{align}
    where $D_i$ is the set of deletions that are relevant in the definition of $A_i$, namely the ones that appears between time $t_i$ and the current time $t$. 
    Denote with $v_i^j$ the $j$'th element added to $A_i$ where $j \le n$. {Note, $n$ provides a uniform upper bound on the number of swaps performed by any call the routines, for simplicity we adopt the convention that element $v_i^j$ is well defined for $j$ going up to $n$, as we can always pretend to add dummy elements without affecting approximation and consistency when the actual algorithm stops)}.
    We know that $u$ is sampled from a candidate set $C_i^j$ with probability $p_i^j(u) = \frac{1}{\weight{u} \zdist(C_i^j)}$ where the normalizing term $\zdist(C)$ is $\sum_{v \in C} \frac{1}{\weight{v}}$.
    One important nuance in our proof is the notion of different relevant deletion sets for each robustness level. 
    
    The \scheduling algorithm ensures that between $t_{i}$ and $t$ in this time period at most $|D_i| \le 2 d_i \le \nicefrac{\rank}{2^{i-1}}$ elements have been deleted (by Lemma~\ref{lem:schedule} and definition of the deletion parameters). On the other hand, \rob ensures that the candidate sets we sample from in level $i$ are always larger than $|C_{i}^j| \ge \nicefrac{d_i}{\e} \ge \nicefrac{\rank}{\e 2^{i+1}}$. Therefore, for any robustness level $i$, and element $j$ added in that level, it holds that $|D_i| \le 4|C_{i}^j|\e$. With these observations in mind, we have the following: 
    \begin{align*}
        &\E{\sum_{v \in A_i} \weight{v} \ind{v \in D_i}}\\
        &=\sum_{j \le n} \E{\sum_{u \in C_i^j} \weight{u} p_i^j(u) \ind{u \in D_i} } \nonumber \\
        &= \sum_{j \le n} \E{\sum_{u \in C_i^j} \weight{u} \frac{1}{\weight{u} \zdist(C_i^j)} \ind{u \in D_i} } \tag{Conditional Expectation and def. of $p_i^j$}\\
        &= \sum_{j \le n} \E{\frac{1}{\zdist(C_i^j)} \sum_{u \in C_i^j} \ind{u \in D_i} } \le \sum_{j \le n} \E{\frac{|D_i|}{\zdist(C_i^j)}   } \nonumber \\
        &\le \sum_{j \le n} \E{\frac{4\e |C_{i}^j|}{\zdist(C_{i}^j)}  } \tag{As  $|D_i| \le 4|C_{i}^j|\e$} = 4 \e \E{\weight{A_i}} \nonumber
    \end{align*}
    Combining the last inequality with \Cref{eq:A_vs_A'} yields Claim~\ref{cl:U_after_deletions}.
    \end{proof}

    \clswapping*
    \begin{proof}[Proof of Claim~\ref{cl:swapping}]
    For the generic element $u \in K$, denote with $s_u$ the element that has been swapped in $I$ to replace $u$. By design of \swap, $w(s_u) - w(u) \ge w(u)$, therefore, we have the following expansion of $w(K)$:
    \begin{align*}
        w(K) \le \sum_{u \in K} w(s_u) - w(u)\le w(\ALG_t^T),
    \end{align*}
    where the last inequality follows by a telescopic argument.
    \end{proof}

        \fvsc*
        \begin{proof}
            We start by proving point (i).
            Sort the elements in $U$ according to the order in which they are added to the solution, and denote with $I_u$ the solution upon insertion of the generic $u \in U$.
            Starting from a telescopic decomposition of $\ALG_t$, we have the following:
            \begin{align*}
                f(\ALG_t) &= \sum_{u \in \ALG_t} f(u|\ALG_t \cap I_u)\\
                &\ge \sum_{u \in \ALG_t} f(u|I_u) \tag{By submodularity}\\
                &= \sum_{u \in \ALG_t} w(u),
            \end{align*}
            where the last equality follows from the definition of $w$.

            We now move to point (ii).    By submodularity, we have that 
            \begin{align*}
                f(O) &\le f(U) + \sum_{o \in O} f(o|U)\\
                &\le w(U) + \sum_{o \in O} f(o|U) \tag{same as Claim~\ref{cl:f_vs_w}}\\
                &\le w(\ALG_t^T) + w(K) + \sum_{o \in O} f(o|U)\\
                &\le 2 w(\ALG_t^T) + \sum_{o \in O} f(o|U)\tag{By Claim~\ref{cl:swapping}}
            \end{align*}
            We conclude by arguing that the second term of the last line is upper bounded by $2w(\ALG_t^T)$. This descends by defining a notion of weight for any element in $\hat X$, not only for those in $U$. Formally, the weight of an element \emph{not in $U$} is its marginal contribution to the solution at the moment that the element is removed from the candidate set $C$, in one of the $\ell+1$ calls of \rob, or during the final \nonrob. With this definition, we can apply directly Theorem 1 of \citet{Badanidiyuru11} (using a single matroid and setting $r=2$), similarly to what has been done e.g., in the appendix of \citet{DuettingFLNZ22}. It is crucial to note that---by design of the algorithm and definition of $\hat X$--- all the elements in $\hat X$ are at some point ``processed'' by a submodular routine, so that they are all ``accounted for''.
        \end{proof}

\subsection{Cardinality Constraint}
\label{app:cardinality}

We start by arguing that \RobustGreedy is indeed robust to deletions. 

    \begin{lemma}
    \label{lem:robustness_card}
        At a non-transition time $t$, the following holds: 
        \[
            \E{f(\ALG_t)} \ge (1-2\e) \E{f(\ALG_t^T)}
        \]
    \end{lemma}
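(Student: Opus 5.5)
We mirror the proof of Claim~\ref{cl:U_after_deletions}, but now work directly with $f$ rather than with an auxiliary weight. At a non-transition time $t$, let $t'$ be the last transition time (of the single level) before $t$. The temporary solution $\ALG_t^T$ is the set $I=I_{t'}=\{u_1,\dots,u_m\}$ built by \RobustGreedy, in the order of insertion, together with the candidate elements added by the non-robust routine. Those latter elements are active by construction, so only elements of $I$ can be lost, and $\ALG_t = \ALG_t^T \setminus D$, where $D$ is the set of elements deleted between $t'$ and $t$. By Lemma~\ref{lem:schedule}(ii) with a single level, $|D|\le t-t'\le d$. Moreover, $D$ is fixed by the oblivious adversary and does not depend on the algorithm's sampling.

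The first step is a telescoping decomposition. Write $I^{j}=\{u_1,\dots,u_j\}$ and set $w(u_j)=f(u_j\mid I^{j-1})$. Order the elements of $\ALG_t^T$ with those of $I$ first (in insertion order) and the remaining ones afterwards. Then
\[
f(\ALG_t^T)-f(\ALG_t)\le \sum_{j\,:\,u_j\in D} f(u_j\mid I^{j-1}) = \sum_{j} w(u_j)\,\ind{u_j\in D}.
\]
To justify this, remove the deleted elements one at a time and use submodularity, i.e.\ $f(S)-f(S-u)\le f(u\mid I^{j-1})$ whenever $I^{j-1}\subseteq S-u$. The cleanest route is the telescoping identity $f(\ALG_t^T)=f(\ALG_t)+\sum_{u\in D\cap \ALG_t^T} f(u\mid \ALG_t\cup\{\text{earlier deleted}\})$, in which each term is at most the corresponding $w(u_j)$ because $I^{j-1}\subseteq \ALG_t\cup\{\text{earlier deleted}\}$. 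The inclusion holds since $I^{j-1}\subseteq \ALG_t^T$ and every element of $I^{j-1}$ is either still active or is one of the earlier deleted elements.

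The second step is the sampling bound, conditioning on the history up to step $j$. The element $u_j$ is drawn from $C'_j$, with $|C'_j|=d/\e$, with probability $\frac{1}{f(u\mid I^{j-1})\,\zdist(C'_j)}$. Hence
\[
\E{w(u_j)\ind{u_j\in D}\mid \text{history}}=\frac{|D\cap C'_j|}{\zdist(C'_j)}\le \frac{d}{\zdist(C'_j)}=\e\,\frac{|C'_j|}{\zdist(C'_j)}=\e\,\E{w(u_j)\mid\text{history}}.
\]
The inequality uses $|D\cap C'_j|\le|D|\le d$. The last equality holds because $\E{w(u_j)\mid\text{history}} = \sum_{u\in C'_j} f(u\mid I^{j-1})\,p(u) = |C'_j|/\zdist(C'_j)$. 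Summing over $j$ and using the tower rule gives $\E{f(\ALG_t^T)-f(\ALG_t)}\le \e\,\E{f(I)}\le \e\,\E{f(\ALG_t^T)}$, by monotonicity. This already yields the stated bound with $2\e$ to spare.

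The main obstacle is the slack in Lemma~\ref{lem:schedule}(ii) near transition windows. The number of deletions between $t'$ and $t$ may be up to $d$ plus the window length $\e' d$, which is why the target constant is $2\e$ rather than $\e$. I would handle this by bounding $|D|\le 2d$, which gives the factor $2\e$ directly. A second point to verify is that the adversary's obliviousness makes $D$ independent of the draws, so that conditioning on the history is legitimate.
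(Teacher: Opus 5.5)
Your proposal is correct and follows essentially the same route as the paper. Both arguments rest on the per-step bound $\E{f(u_j\mid I^{j-1})\ind{u_j\in D}\mid \text{history}}\le \e\,\E{f(u_j\mid I^{j-1})\mid \text{history}}$, obtained from inverse-marginal sampling over $|C'_j|=d/\e$ with $|D|\le d$, combined with a submodular telescoping; the paper telescopes over the surviving elements of $I_{t'}$ and then adds the marginal of the non-robust elements, whereas you telescope the deleted elements on top of $\ALG_t$, which is equivalent. This yields $(1-\e)$ and hence the stated $(1-2\e)$. Your worry about an extra $\e' d$ of slack is unnecessary: $t'$ is the start of its transition window, and Lemma~\ref{lem:schedule}(ii) already gives $t-t'\le d$. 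Your $|D|\le 2d$ fallback would in any case suffice for the stated constant.
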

    \begin{proof}
        Fix any realization of \scheduling, and consider the generic non-transition time $t$. The current solution $\ALG_t$ is computed starting from the last call of \RobustGreedy (at transition time $t'$) and by adding all the active candidate elements from that call, plus all the elements arrived since then but still not deleted. Denote with $D$ the elements in $X_{t'}\setminus X_t$, i.e., the elements that were active upon the last call of \RobustGreedy but deleted since then. Denote with $I_{t'}$ the solution computed by \RobustGreedy, and denote with $e_j$ the random variable representing the random $j^{th}$ element added during the routine to the temporary solution $I_{t'}^j.$ Consider any run of the algorithm up to time $j$ (excluded), so that the current solution $I_{t'}^j$ is deterministically fixed. The actual element $e_j$ is drawn from the deterministic set $C'_j$ that contains the $\nicefrac{d}\e$ elements with largest contribution to $I_{t'}^j$, with probability inversely proportional to their marginal contribution to $I_{t'}^j.$ The expected contribution of element $e_j$ to the final solution is only affected by a $(1-\e)$ factor from the adversarial deletions:
        \begin{align}
        \notag
            \mathbb{E} &\left[f(e_j|I_{t'}^j) \ind{e_j \in D}\right] \\
            &= \sum_{e \in C'_j} f(x|I_{t'}^j) \ind{e \in D} \P{e_j = e} \tag{$\ind{e \in D}$ is det.}\\
            &=  \sum_{e \in C'_j} f(e|I_{t'}^j) \ind{e \in D} \frac{\nicefrac{1}{f(e|I_{t'}^j)}}{\sum_{y \in C'_i} \nicefrac{1}{f(y|I_{t'}^j)}}  \tag{By def.}\\
            \nonumber
            &= \frac{|D \cap C'_j|}{\sum_{y \in C'_j} \nicefrac{1}{f(y|I_{t'}^j)}}\\
            &\le \frac{d}{\sum_{y \in C'_j} \nicefrac{1}{f(y|I_{t'}^j)}}
            \tag{Because $|D \cap C_j'| \le |D| \le d$}\\
            &= \frac{d}{|C_j'|} \E{f(e_j|I_{t'}^j)} \notag \\
            &=\e \cdot \E{f(e_j|I_{t'}^j)} \tag{as $|C'_j| = \nicefrac{d}{\e}$}
        \end{align}
        Since the above inequality holds for any possible run of \RobustGreedy up to time $j$, it also holds in expectation overall by the law of total probability. 
        We have the following chain of inequalities: 
        \begin{align}
            \E{f(I_{t'}\setminus D)} &= \sum_{j=1}^{\rank(1-2\e)} \E{\ind{e_j \notin D}f(e_j|I_{t'}^j \setminus D)} \notag \\
            &\ge \sum_{j=1}^{\rank(1-2\e)} \E{\ind{e_j \notin D}f(e_j|I_{t'}^j)} \notag \\
            &= \sum_{j=1}^{\rank(1-2\e)} \E{(1-\ind{e_j \in D})f(e_j|I_{t'}^j)} \notag \\
            &= \E{f(I_{t'})} - \!\!\!\!\!\sum_{j=1}^{\rank(1-2\e)}\!\!\E{ \ind{e_j \in D}f(e_j|I_{t'}^j)}
            \notag \\
            &\ge \E{f(I_{t'})}(1-\e).\label{eq:I_vs_I'}
        \end{align}
        Note, it is possible that \RobustGreedy stops before adding $\rank(1-2\e)$ elements because there are not enough elements left in $C$. In that case, the analysis still goes through by pretending to add some dummy elements of no value. All in all, we have: 
        \begin{align*}
            (1&-\e)\E{f(\ALG^T_t)}\\
            &=(1-\e)\E{f(I_{t'})+ f(X_t \setminus (X_{t'}\setminus C_{t'})|I_{t'})}\\
            &\le \E{f(I_{t'}\setminus D)+f(X_t \setminus (X_{t'}\setminus C_{t'})|I_{t'})}\tag{By \Cref{eq:I_vs_I'}}\\
            &\le \E{f(I_{t'}\setminus D)+f(X_t \setminus (X_{t'}\setminus C_{t'})|I_{t'}\setminus D)}\\
            &=\E{f(\ALG_t)}.
        \end{align*}
        Where the last equality follows by the fact that the non-robust routine for cardinality simply adds all the elements in $(X_t \setminus X_{t'}) \cup C_{t'}$ that are still active. 
    \end{proof}

    Consider now a generic non-transition time $t$, and denote with $t'$ the last transition time before it. The temporary solution $\ALG_t^T$ is the union of the output $I_{t'}$ of \RobustGreedy at time $t'$,  the candidate set $C_{t'}$, and the recent element $X_t\setminus X_{t'}$. We have the following property:
    
    \begin{lemma}\label{lem:theta}
       There exists a threshold value $\theta$ that verifies the following two properties: 
        \begin{itemize}
            \item[\textnormal{(i)}] $f(I_{t'}) \ge |I_{t'}| \cdot \theta$,
            \item[\textnormal{(ii)}] $f(x | \ALG_t^T) \le \theta$ for all $x \in X_t\setminus \ALG_t^T$.
        \end{itemize}
    \end{lemma}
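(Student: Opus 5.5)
The plan is to take $\theta$ to be the cutoff value of the top-$\nicefrac d\e$ selection in \RobustGreedy, and to use submodularity in two directions. First, that cutoff can only decrease as $I$ grows, which gives (i). Second, marginals with respect to $I_{t'}$ dominate marginals with respect to $\ALG_t^T \supseteq I_{t'}$, which gives (ii).

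Fix the run of \RobustGreedy at time $t'$. For the $j$-th iteration, with current partial solution $I^j$ and filtered candidate set $C_j$, let $\theta_j$ be the $(\nicefrac d\e)$-th largest value of $f(u\mid I^j)$ over $u\in C_j$. Let $\theta_{\mathrm{fin}}$ be the analogous quantity computed at termination with respect to $I_{t'}$ (taken to be $0$ if fewer than $\nicefrac d\e$ candidates with positive marginal remain). Set $\theta=\theta_{\mathrm{fin}}$.

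\textbf{Monotonicity of the cutoff.} As $j$ increases, $I^j$ grows and $C_j$ only shrinks. Only elements with marginal $0$ are filtered out, plus the sampled one. By submodularity every marginal $f(u\mid I^j)$ is nonincreasing in $j$, so the $(\nicefrac d\e)$-th largest value is nonincreasing too. Hence $\theta_j\ge\theta$ for every iteration $j$.

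\textbf{Property (i).} The sampled element $e_j$ lies in $C'_j$, the top $\nicefrac d\e$ elements, so $f(e_j\mid I^j)\ge\theta_j\ge\theta$. Telescoping gives
\[
f(I_{t'})=\sum_j f(e_j\mid I^j)\ge|I_{t'}|\,\theta .
\]

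\textbf{Property (ii).} Take $x\in X_t\setminus\ALG_t^T$. Every element of $X_t\setminus X_{t'}$ and every active element of $C_{t'}$ is placed in $\ALG_t^T$, since the non-robust routine adds them all. Therefore $x\in X_{t'}$, $x\notin C_{t'}$ and $x\notin I_{t'}$. So $x$ was discarded by \RobustGreedy, and there are two cases.
\begin{itemize}
\item $x$ was removed by the positivity filter at some iteration. Then $f(x\mid I^j)\le 0$ for some $I^j\subseteq\ALG_t^T$. By submodularity and monotonicity, $f(x\mid\ALG_t^T)=0\le\theta$.
\item $x$ survived to the end but is not among the returned candidates. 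Here I assume the returned $C$ is the final top-$\nicefrac d\e$ set; this matches the paper's claim that $|C|\le\nicefrac d\e$, which is the natural reading when the loop completes all $\rank-\nicefrac{2d}\e$ iterations without the break. Then $x$ is below the cutoff, so $f(x\mid I_{t'})\le\theta_{\mathrm{fin}}$. Since $I_{t'}\subseteq\ALG_t^T$, submodularity gives $f(x\mid\ALG_t^T)\le\theta$.
\end{itemize}

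The main obstacle is pinning down exactly what \RobustGreedy returns as $C$ in the two termination modes, break versus exhausting the iterations. Property (ii) must cover every discarded element, and $\theta$ must be defined consistently with that output. If the break fires, every remaining positive-marginal element is returned, and $\theta_{\mathrm{fin}}$ can be taken as $0$ (or as any lower bound on the $\theta_j$), so (ii) again reduces to the zero-marginal case. I would handle the two modes separately and check in each that $\theta\le\min_j\theta_j$.
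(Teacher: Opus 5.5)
Your proof is correct and follows the paper's route: $\theta$ is the smallest marginal with respect to $I_{t'}$ among the returned top-$\nicefrac d\e$ candidates; (i) follows because this cutoff only decreases, combined with telescoping; and (ii) follows from $I_{t'}\subseteq\ALG_t^T$ and submodularity. Your explicit order-statistic argument for why the cutoff is nonincreasing, and your convention $\theta=0$ when fewer than $\nicefrac d\e$ positive-marginal candidates remain, are slightly more careful than the paper: its choice $\min_{x\in C_{t'}}f(x\mid I_{t'})$ can exceed an earlier $f(e_j\mid I^j)$ when the break fires with $|C_{t'}|<\nicefrac d\e$.
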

    
    \begin{proof}
        The desired threshold $\theta$ has a simple formula as a function of the sets $I_{t'}$ and $C_{t'}$ in the output of \RobustGreedy:
        \[
            \theta = \min_{x \in C_{t'}} f(x | I_{t'})
        \]
        Let $\{e_1, \ldots, e_{|I_{t'}|}\}$ be the elements in $I_{t'}$ according to the order in which they were inserted, and denote with $I^j_{t'}$ the prefix of the first $j-1$ such elements. By submodularity, the quantity $\min_{x\in C} f(x  | I_{t'})$ only decreases as the algorithm progresses (because submodularity implies that $f( e | I)$ only decreases when elements are added to $I$). Therefore,  $\theta$ is smaller than the marginal contribution of any element added to $I_{t'}$ during the algorithm. Thus, we have the following:
        \begin{align*}
            f(I_{t'}) = \sum_{j=1}^{|I_{t'}|} f(e_j \mid I_{t'}^{j})  \geq |I_{t'}| \cdot \theta.
        \end{align*}
        This concludes the proof of (i). We move to (ii), for which we need to prove that any element $x$ in $X_t\setminus \ALG_t^T$ has a marginal contribution with respect to $\ALG_t^T$ that is upper bounded by the threshold $\theta$. $\ALG_t^T$ contains the elements in $I_{t'}$, those in $C_{t'}$, and the recent elements $X_{t}\setminus X_{t'}$, therefore we only need to address the elements that were in $X_{t'}$ but were discarded by \RobustGreedy, i.e. $X_{t} \setminus I_{t'} \cup C_{t'}$. Such elements were characterized by the fact that their marginal contribution to $I_{t'}$ is smaller than that of all the elements in $C_{t'}$, and thus it is smaller than $\theta$. Recall, $I_{t'} \subseteq \ALG_t^T$, therefore, by submodularity, we then have
        \[
            f(x|\ALG_t^T) \le f(x|I_{t'}) \le \theta. \qedhere
        \]
    \end{proof}

    \begin{lemma}
\label{lem:consistenty_cardinality}
    \cardinality has consistency $O(\nicefrac 1{\e^2}).$
\end{lemma}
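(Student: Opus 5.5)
We follow the structure of the consistency proof for \matroid (Lemma~\ref{lem:consistenty}) and split the time steps into non-transition steps and steps inside a transition window. Here there is a single robustness level $d=\e^2\rank$, and $\e'=\e$ because $|\cL|=1$. Consecutive transition times are therefore spaced exactly $d$ apart, and each transition window has length $\e d=\e^3\rank$. We also use two size bounds. First, the candidate set returned by \RobustGreedy satisfies $|C_{t'}|\le \nicefrac d\e=\e\rank$. Second, at most $d$ stream operations take place between a transition time $t'$ and any later time before the next transition time, by Lemma~\ref{lem:schedule}(ii).

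\emph{Non-transition steps.} Let $t$ and $t-1$ both be non-transition steps with the same reference time $t'$. Then $\ALG^T_{t-1}=I_{t'}\cup\big(X_{t-1}\setminus(X_{t'}\setminus C_{t'})\big)$ and $\ALG^T_t=I_{t'}\cup\big(X_{t}\setminus(X_{t'}\setminus C_{t'})\big)$. After intersecting with the active sets, the two solutions differ only in the element $x_t$, which is either inserted or deleted. Hence $|\ALG_t\triangle\ALG_{t-1}|\le 1$.

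Now let $t$ be the first non-transition step after the window starting at $t'$. Inside the window the solution converges to $X\cap I_{t'}$, where $X$ is the current active set. At time $t$ the non-robust routine adds all active elements of $C_{t'}$ together with the at most $\e d$ elements inserted during the window. The resulting change is at most $|C_{t'}|+\e d+1\le 2\e\rank$. This is not $O(\nicefrac1{\e^2})$, so this case needs fixing. The natural fix is to let the transition target be the full temporary solution $I_{t'}\cup(C_{t'}\cap X_{t'})$ rather than $I_{t'}$. Equivalently, we treat the transition window as moving from $\ALG_{t'-1}$ to the set the non-robust routine would output at the end of the window. The elements inserted during the window then enter one per step, and the jump at the end of the window disappears. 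I would state the argument with the framework interpreted this way, which I believe is what the paper intends. The window step below then absorbs the extra $|C_{t'}|$ term.

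\emph{Transition steps.} This is the main estimate. Fix a window starting at $t_0\in T_0$ with previous transition time $t'=t_0-d$. The old solution $\ALG_{t_0-1}$ is contained in $I_{t'}\cup C_{t'}\cup(X_{t_0}\setminus X_{t'})$. The new target is contained in $I_{t_0}\cup C_{t_0}$. Both sets have cardinality at most $\rank$, because $|I|\le \rank-\nicefrac{2d}\e$, $|C|\le\nicefrac d\e$, and at most $d$ elements arrive between transition times. So the symmetric difference is at most $2\rank$. This crude bound is enough. The window has $\e' d=\e\cdot\e^2\rank=\e^3\rank$ steps, and spreading the $2\rank$ changes evenly over it, as in the feasible-transition discussion of Appendix~\ref{app:missing}, gives $O(\nicefrac1{\e^3})$ changes per step. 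Deletions in the window only remove elements, adding at most one change per step.

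The main obstacle is to improve this to $O(\nicefrac1{\e^2})$. Since \RobustGreedy recomputes $I_{t_0}$ from scratch, the total change really can be of order $\rank$. Two fixes are possible: lengthen the window to $\e d$ with $d=\e\rank$, or read the parameter choice so that $\e' d=\e^2\rank$ (for instance with $\e'=\e$ and $d=\e\rank$, so $\nicefrac d\e=\rank$ is still compatible with the size budget once the constants are rescaled). I would first try taking $d=\e\rank$, with the solution capped at $\rank(1-2\e)$ and candidates bounded by $\rank$ after rescaling. The window then has length $\e^2\rank$, and the bound becomes $\nicefrac{2\rank}{\e^2\rank}=O(\nicefrac1{\e^2})$. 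The remaining step is to check that the robustness bound of Lemma~\ref{lem:robustness_card}, which needs $|D|\le\e|C'|$, still holds with $|C'|=\nicefrac d\e$ after this adjustment.
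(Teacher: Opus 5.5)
\textbf{Verdict: genuine gap.} Your proposal does not prove $O(\nicefrac{1}{\e^2})$. The window estimate you actually carry out gives $O(\nicefrac{1}{\e^3})$, and the last paragraph leaves the improvement open.

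\textbf{Where you diverge from the paper.} Your case split is the paper's proof, which is two sentences. Consecutive non-transition steps differ in at most one element. A transition window ``contains $\e^2\rank$ time indices'' and absorbs at most $\rank$ changes. The two obstacles you found are real, and the paper's argument does not resolve either of them.
\begin{itemize}
\item With a single level, $|\cL|=1$ and $\e'=\e$, so \scheduling produces windows of $\e' d=\e^3\rank$ indices. The paper's count of $\e^2\rank$ does not follow from its parameters. The step you could not complete is exactly the step the paper asserts without derivation.
\item The paper never treats the step from the last window index to the first non-transition index. There, the framework as written adds $C_{t'}\cap X_t$ and all window arrivals at once, which can be of order $\e\rank$ elements.
\end{itemize}
Your fix for the boundary jump is sensible: transition towards $I_{t'}\cup(C_{t'}\cap X)$ and admit window arrivals as they come. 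It leaves the approximation analysis untouched, since that analysis only looks at non-transition times. But it modifies the algorithm rather than analysing it.

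\textbf{Why your window-length repair fails.} It fails, and not at the check you single out. With $d=\e\rank$, the robustness ratio $|D|/|C'_j|\le\e$ used in Lemma~\ref{lem:robustness_card} is still fine. What breaks is the size budget.
\begin{itemize}
\item \RobustGreedy would run for $\rank-\nicefrac{2d}{\e}=-\rank$ iterations.
\item The non-robust step adds all of $C_{t'}$ (up to $\nicefrac d\e=\rank$ elements) on top of $I_{t'}$, so $\ALG_t^T$ need not be feasible.
\item To get $\nicefrac12-O(\e)$, the approximation proof needs $|I_{t'}|=(1-O(\e))\rank$ in the full case.
\item The threshold argument of Lemma~\ref{lem:theta} needs the returned $C_{t'}$ to be the whole final sampling pool, so the leftover cannot be made smaller than the pool.
\end{itemize}
Together these force $\nicefrac d\e=O(\e\rank)$, i.e.\ $d=O(\e^2\rank)$. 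Transitions must be spaced $d$ apart (needed for $|D|\le d$), and windows may cover only an $\e$ fraction of time. So windows have $O(\e^3\rank)$ indices.

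\textbf{Why no rescaling helps.} Each transition reruns \RobustGreedy from scratch with fresh randomness, so consecutive targets can differ in $\Theta(\rank)$ elements. Take a weighted coverage function in which $a_i$ and $b_i$ cover the same item, with slightly decreasing weights across pairs. Two runs hit at least $(1-3\e)\rank$ common pairs. Each run picks $a_i$ or $b_i$ with equal probability, so with positive probability one run takes all the $a_i$ and the other all the $b_i$. Hence per-step changes of order $\nicefrac{1}{\e^3}$ do occur for \cardinality as specified. No choice of $d$ within this design gives $O(\nicefrac{1}{\e^2})$ while keeping a $\nicefrac12-O(\e)$ guarantee. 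What your argument, with the modified transition target, actually establishes is $O(\nicefrac{1}{\e^3})$-consistency. Reaching $O(\nicefrac{1}{\e^2})$ would need a structural change to the algorithm or its analysis, not a parameter adjustment.
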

\begin{proof}
    From one non-transition time to the next, the solution changes by at most one element (either due to a deletion or an addition). Consider now a generic transition window; it contains $\e^2 \rank$ time indices, and it has to accommodate at most $\rank$ changes in the solution, for a per-time-step rate of $O(\nicefrac 1 \e^2)$ changes in the solution. 
\end{proof}

    \cardinalitythm*
    \begin{proof}
        We start by addressing the approximation claim. Consider a generic non-transition time $t$, and the last transition time $t'$ before $t$. Denote with $\OPT_t = \{o_1, \ldots, o_{k}\}$ the optimal dynamic solution at time $t$, and let $\theta$ be the threshold as in Lemma~\ref{lem:theta}. We have now two cases, depending on the cardinality of $|I_{t'}|$.

        If $|I_{t'}| = (1-2\e)\rank$, then \begin{align}
            f(\OPT_t) &\leq f(\ALG_t^T) + f(\OPT \mid \ALG_t^T) \notag \\ 
               &\leq f(\ALG_t^T) + \sum_{i=1}^{k} f(o_i \mid \ALG_t^T) \notag \\
               &\leq f(\ALG_t^T) + \rank  \theta  \tag{By (ii) in Lemma~\ref{lem:theta}} \\
               &\le f(\ALG_t^T) + \frac{1}{1-2\e}f(I_{t'}) \tag{By (i) in Lemma~\ref{lem:theta}}\\
               &\le \left(\frac{2-2\e}{1-2\e}\right) f(\ALG_t^T). \label{align:bound_f_a} 
    \end{align}
    If $|I_{t'}|<k(1-2\e)$, then it means that all the elements in $X_{t'}\setminus C_{t'}$ have zero marginal contribution to $I_{t'}$. Since all the elements in $C_{t'} \cup (X_t \setminus X_{t'})$ are added to $\ALG_t^T$, and thus have zero marginal contribution to it. All in all, this implies that all the elements in the optimal solution have zero marginal with respect to the temporary solution:
    \begin{align}
            f(\OPT_t) &\leq f(\ALG_t^T) + f(\OPT_t \mid \ALG_t^T) \notag \\ 
               &\leq f(\ALG_t^T) + \sum_{i=1}^{k} f(o_i \mid \ALG_t^T) \nonumber \\
               &= f(\ALG_t^T). \label{align:bound_f_b} 
    \end{align}
    Combining \Cref{align:bound_f_a,align:bound_f_b} with the robustness bound of Lemma~\ref{lem:robustness_card}, yields the following inequality for any non-transition time steps $t$:
    \begin{equation}
        \label{eq:approximation_card}
            \E{f(\ALG_t)} \ge \left(\tfrac 12 - 2\e\right)f(\OPT_t).
    \end{equation}
    Note, the expectation is only with respect to the random sampling performed by the calls to \RobustGreedy. 

    Consider now the generic time step $t$, it belongs to a transition window with probability at most $\e$ (by Lemma~\ref{lem:schedule}). All in all, we have: 
    \begin{align*}
            \E{f(\ALG_t)} &\ge (1\!-\!\e)\E{f(\ALG_t)| t\text{ not transition time}}\\
            & \ge (1\!-\!\e)\left(\tfrac{1}{2} \!-\! 2\e \right) \!f(\OPT_t)\tag{By Eq.~\ref{eq:approximation_card}}\\
            &\ge \left(\tfrac{1}2\!-3\e\right) f(\OPT_t).
        \end{align*}
        The consistency guarantees follow directly from Lemma~\ref{lem:consistenty_cardinality}, this concludes the proof.
    \end{proof}
\end{document}